\documentclass[11pt]{article}

\usepackage[margin=1.2in]{geometry}
\usepackage[T1]{fontenc}
\usepackage[dvipsnames]{xcolor}
\usepackage{microtype}
\usepackage{amsmath,amssymb}
\usepackage{mathtools}
\usepackage{amsthm}
\usepackage[numbers,sort&compress]{natbib}

\usepackage[hypertexnames=false,colorlinks=true,linkcolor=BrickRed,citecolor=Green,urlcolor=Blue]{hyperref}
\usepackage[capitalise,noabbrev]{cleveref}
\usepackage{etoolbox}

\theoremstyle{plain}
\newtheorem{theorem}{Theorem}
\newtheorem{lemma}[theorem]{Lemma}
\newtheorem{corollary}[theorem]{Corollary}
\newtheorem{proposition}[theorem]{Proposition}

\theoremstyle{definition}

\theoremstyle{remark}

\crefname{theorem}{Theorem}{Theorems}
\crefname{lemma}{Lemma}{Lemmas}
\crefname{corollary}{Corollary}{Corollaries}
\crefname{proposition}{Proposition}{Propositions}
\crefname{conjecture}{Conjecture}{Conjectures}
\crefname{definition}{Definition}{Definitions}
\crefname{problem}{Problem}{Problems}
\crefname{example}{Example}{Examples}
\crefname{remark}{Remark}{Remarks}
\crefname{appendix}{Appendix}{Appendices}
\apptocmd{\appendix}{%
  \crefalias{section}{appendix}%
}{}{}
\usepackage{soul}
\usepackage{graphicx}
\graphicspath{{../}{./}}
\usepackage{tabularx}
\usepackage{booktabs}
\usepackage{array}
\usepackage{calc}
\usepackage{algorithm}
\usepackage{algpseudocode}
\newenvironment{coflatalgo}{%
  \smallskip\small
  \begin{tabbing}
    \quad\=\qquad\=\qquad\=\qquad\=\qquad\=\qquad\=\qquad\=\qquad\=\qquad\=\qquad\=\qquad\=\qquad\=\qquad\=\kill}%
  {\end{tabbing}\smallskip}
\usepackage{listings}
\providecommand{\tightlist}{\setlength{\itemsep}{0pt}\setlength{\parskip}{0pt}}

\NewDocumentCommand\CSLBibitem{o m}{}
\newlength{\cslhangindent}
\newlength{\csllabelwidth}
\newlength{\cslentryspacingunit}
\title{Derandomizing Karger's Contraction Algorithm for Matroids}
\author{Yu Cong\thanks{\texttt{yucong143@gmail.com}} \and Chao Xu\thanks{\texttt{thechaoxu@gmail.com}. Supported by the National Natural Science Foundation of China under grant 62372093 and by the CCF-Huawei Populus Grove Fund -- Theoretical Computer Science Special Project CCF-HuaweiLK2025003.} \and Yajie Zhao\thanks{\texttt{yajiezhao8@gmail.com}}\\[4pt]{\small University of Electronic Science and Technology of China, Chengdu, China}}
\date{}

\newcommand{\FF}{\mathbb{F}}

\begin{document}

\maketitle
{\let\thefootnote\relax\footnotetext{\textbf{AI Disclosure.} The authors acknowledge the use of Claude Fable 5 to streamline proofs, sharpen constants, and condense the manuscript. All results are due to the authors, who independently verified every argument and take full responsibility for the content.}}

\begin{abstract}
Karger's randomized contraction algorithm finds a minimum-weight cocircuit of a matroid whenever the cogirth-density ratio is bounded. We prove that the same hypothesis yields a deterministic algorithm with the same exponent. If every contraction minor of rank at least \(r_0\) of a matroid \(M\) has cogirth-density ratio at most \(c\), then a minimum-weight cocircuit of \(M\) is computable deterministically in \(m^{O(r_0)}\,n^{O(c)}\) time when the contraction minors of bounded rank have at most \(m\) parallel classes, by an algorithm that knows neither \(r_0\) nor \(c\). As a consequence, we give a deterministic algorithm computing the cogirth of rank-\(p\) perturbed graphic matroids in \(2^{O(p^2)}\,n^{O(1)}\) time, fixed-parameter tractable in \(p\), settling the cogirth side of a question of Geelen and Kapadia \citep{geelen_computing_2018}.\\
The extensions of the contraction method carry over deterministically: enumerating all near-minimum \(1\)-cocycles, computing a minimum-weight \(k\)-cocycle, and computing the Pareto frontier under several positive criteria.
\end{abstract}

\section{Introduction}\label{introduction}

Karger introduced the randomized contraction algorithm for the minimum cut problem on graphs: contract a uniformly random edge, and repeat until the minimum cut can be read off directly \citep{karger_global_1993}. The analysis rests on a hitting bound. A fixed minimum cut of size \(\lambda\) contains a uniformly random edge with probability at most \(\lambda/|E| \le 2/|V|\), since every vertex has degree at least \(\lambda\). The same bound applies after each contraction, so a fixed minimum cut survives the whole process with inverse polynomial probability.

The hitting bound reads naturally in matroid language. Let \(M\) be a matroid of rank \(r\) on \(n\) elements, and let the cogirth \(\lambda(M)\) denote the minimum weight of a cocircuit of \(M\). A uniformly random element lies in a fixed minimum cocircuit with probability \(\lambda(M)/n\). The \emph{cogirth-density ratio} of \(M\) is \(\lambda(M)\,r/n\). For a connected graph the ratio is at most \(2-2/|V|\), which recovers Karger's bound. More generally, if the ratio stays bounded by \(c\) along contractions, random contraction computes the cogirth with success probability \(n^{-O(c)}\). Karger generalizes the contraction analysis to matroids, bounding the hitting probability through the number of disjoint bases \citep[Lemma 4.5]{Karger98}, that is, a bound on the cogirth-strength ratio. The cogirth-strength and cogirth-density ratios are bounded together for contraction-closed classes of matroids (\cref{sec:algorithm}).

For graphs, the contraction method has been derandomized through tree packing. Karger packs spanning trees so that a minimum cut shares at most two edges with some packed tree, and finds the best such cut deterministically for each tree \citep{karger_minimum_2000}. Thorup extends the approach to minimum \(k\)-cut through recursive greedy tree packings, showing that a minimum \(k\)-cut shares at most \(2k-2\) edges with some packed tree \citep{thorup_minimum_2008}. Chekuri et al.~derive the packing from the dual of the Naor--Rabani linear program, simplify the analysis, and improve the bound to \(2k-3\) \citep{chekuri_lp_2020}.

The contraction argument shows the number of \(\alpha\)-approximate minimum cuts is \(O(n^{2\alpha})\) \citep{karger_global_1993}. This counting underlies the multicriteria theory of cuts. Armon and Zwick enumerate all Pareto-optimal cuts under several criteria in pseudo-polynomial time \citep{armon_multicriteria_2006}, Aissi et al.~prove strongly polynomial bounds for parametric and bicriteria cuts \citep{aissi_strongly_2015}, Karger enumerates all parametric minimum cuts by random interleaving \citep{karger_enumerating_2016}, and Beideman et al.~bound the number of multiobjective minimum cuts in hypergraphs \citep{beideman_multicriteria_2020}.

Geelen and Kapadia carry the contraction method to a matroid class close to graphic. A rank-\(p\) \emph{perturbed graphic matroid} (PGM) is a binary matroid represented over \(\FF_2\) by \(A(G)+P\), where \(A(G)\) is the vertex--edge incidence matrix of a multigraph \(G\) and \(P\) is a matrix of rank at most \(p\). They prove that contraction minors of sufficiently high rank contain small cocircuits \citep[Lemma 3.6]{geelen_computing_2018}, and they give a randomized contraction algorithm that computes the cogirth of a rank-\(p\) perturbed graphic matroid of rank \(r\) with \(n\) elements in time \(2^{2^p+O(p)}\,r^5n\), where the parameter factor follows from the rank-\(2^p\) base case of their analysis.

Known deterministic results for perturbed graphic matroids are partial. Nägele et al.~solve a submodular generalization of the \(p\)-set even-cut problem, a special case of the cogirth of perturbed graphic matroids asking for a minimum cut whose side meets each of \(p\) given vertex sets evenly, in \(n^{O(p)}\) time \citep{nagele_submodular_2019}. The algorithm of Fomin et al.~is fixed-parameter tractable only in \(p\) and \(\lambda(M)\) jointly \citep{fomin_covering_2019}. Beyond bounded perturbation rank the problem is hard. The cogirth of a general binary matroid is the minimum distance of a linear code, which is NP-hard to compute \citep{Vardy97} and W{[}1{]}-hard parameterized by the distance \citep{BhattacharyyaEtAl21}. Geelen and Kapadia ask for efficient deterministic algorithms for the girth and the cogirth of perturbed graphic matroids, remarking that finding them ``seems to be quite difficult'' \citep{geelen_computing_2018}.

\textbf{Our results.} Let \(M\) be a loopless matroid of rank \(r\) on \(n\) elements with weights \(w\), given by an independence oracle, and suppose every contraction minor \(N\) of rank at least \(r_0\) has cogirth-density ratio \(\lambda(N)\,r(N)/w(E(N))\) at most \(c\), the \emph{density hypothesis}. Write \(m\) for the maximum number of parallel classes of a contraction minor of bounded rank. Deterministic algorithms then match the randomized exponents.

\begin{itemize}
\tightlist
\item
  A minimum-weight cocircuit of \(M\) is computable in \(m^{O(r_0)}\,n^{O(c)}\) time. The algorithm knows neither \(r_0\) nor \(c\).
\item
  The cogirth of a weighted rank-\(p\) perturbed graphic matroid is computable deterministically in \(2^{O(p^2)}\,n^{O(1)}\) time, fixed-parameter tractable in \(p\). This settles the cogirth side of the question of Geelen and Kapadia.
\item
  For every \(k\le r\), a minimum-weight \(k\)-cocycle, a minimum-weight set whose removal drops the rank by \(k\), is computable in \(m^{O(r_0)}\,n^{O(kc)}\) time.
\item
  For every \(\theta\ge1\), when \(\lambda(M)>0\), a list of \(1\)-cocycles containing every \(1\)-cocycle of weight at most \(\theta\lambda(M)\) is computable in \(m^{O(r_0)}\,n^{O(\theta c)}\) time.
\item
  If the density hypothesis holds under every positive weight function, then for fixed \(q\) all Pareto-optimal cocircuits under \(q\) positive criteria are computable in \(m^{O(r_0)}\,n^{O(qc)}\) time, up to a factor polynomial in the logarithm of the weight spread.
\end{itemize}

The cogirth algorithm and the extensions are deterministic counterparts, at the matroid level, of the contraction-method bounds above, previously available deterministically only for graphs and constant-rank hypergraphs. The weight of the paper is in the reduction of minimum-weight \(k\)-cocycles to the cogirth of a truncation while maintaining the density bound, and in the perturbed graphic matroid theorem, singly exponential in \(p\) where the randomized parameter factor \(2^{2^p+O(p)}\) is doubly exponential.

\begin{table}[!htbp]
\centering
\caption{Computing the cogirth of a rank-\(p\) perturbed graphic matroid of rank
\(r\), where \(\lambda\) is the cogirth and \(g\) is a computable
function.}\label{tbl:pgm}
\small
\begin{tabularx}{\textwidth}{@{}XlXX@{}}
\toprule
Reference & Type & Running time & Scope \\
\midrule
Geelen--Kapadia & randomized & \(2^{2^p+O(p)}\,r^5n\) & PGM \\
Nägele et al. & deterministic & \(n^{O(p)}\) & \(p\)-set even-cut \\
Fomin et al. & deterministic & \(g(p,\lambda)\,n^{O(1)}\) & PGM \\
this paper & deterministic & \(2^{O(p^2)}\,n^{O(1)}\) & PGM \\
\bottomrule
\end{tabularx}
\end{table}

The method is the matroid form of tree packing: we pack bases of the \emph{truncation} \(T_{r_0-1}(M)\), the matroid whose independent sets are the independent sets of \(M\) of size at most \(r-r_0+1\), so that some packed base shares at most \(\lfloor c\rfloor\) elements with a minimum-weight cocircuit and contracting the rest of that base leaves a bounded-rank residual. The main algorithm certifies optimality from the packing value without knowing \(r_0\) or \(c\).

The running times cannot avoid \(k\) in the exponent, nor a superpolynomial factor in \(p\). At \(p=0\) the minimum-weight \(k\)-cocycle problem contains minimum \((k+1)\)-cut in graphs, since a minimum-weight \(k\)-cocycle of the graphic matroid of a connected graph is a minimum-weight edge set whose removal leaves at least \(k+1\) components. Minimum \(k\)-cut is W{[}1{]}-hard parameterized by \(k\) \citep{DowneyEFPR03} and, assuming ETH, admits no \(n^{o(k)}\)-time algorithm \citep{gupta_optimal_2022}, so the linear dependence on \(k\) in the exponent is asymptotically tight already for graphic matroids. The dependence on \(p\) cannot be polynomial either: every binary matroid of rank \(\rho\) is a rank-\((\rho+1)\) PGM, by viewing all elements as parallel copies of a single edge and absorbing the representation into the perturbation, so computing the cogirth of PGMs with unbounded \(p\) is the minimum-distance problem and carries its hardness.

\section{Preliminaries}\label{sec:prelim}

Let \(M\) be a matroid on ground set \(E\) with rank function \(r\). We write \(r(M)\) for \(r(E)\), \(E(M)\) for the ground set, and \(n:=|E|\). The closure of \(X\subseteq E\) is \(\operatorname{cl}(X):=\{e\in E:r(X\cup\{e\})=r(X)\}\), and we subscript \(r_M\) and \(\operatorname{cl}_M\) only when several matroids are in play. A \textbf{flat} is a set \(F\subseteq E\) with \(r(F\cup\{e\})>r(F)\) for all \(e\in E\setminus F\), a \textbf{hyperplane} is a flat of rank \(r(M)-1\), and a \textbf{cocycle} is the complement of a flat. A \textbf{\(k\)-cocycle} is a cocycle \(S\) with \(r(E\setminus S)\le r(M)-k\), and a \textbf{cocircuit} is an inclusion-minimal \(1\)-cocycle \citep{Oxley06}.
If removing \(X\) drops the rank by \(k\), then \(E\setminus\operatorname{cl}(E\setminus X)\subseteq X\) is a \(k\)-cocycle, so the minimum weight of a set whose removal drops the rank by \(k\) is attained by a \(k\)-cocycle.
Every matroid carries a weight function \(w:E\to\mathbb R_{\ge0}\), and all quantities are weighted. Put \(\lambda_k(M):=\min\{w(S):S\text{ a }k\text{-cocycle}\}\). The \textbf{cogirth} is \(\lambda:=\lambda_1\), the minimum weight of a cocircuit, since every \(1\)-cocycle contains a cocircuit and weights are nonnegative. A \textbf{fractional base packing} assigns coefficients \(\alpha_i\ge0\) to bases \(B_i\) with load \(\sum_{i:e\in B_i}\alpha_i\le w_e\) on each element \(e\), and its \textbf{value} is \(\sum_i\alpha_i\). The \textbf{strength} \citep{cunningham_optimal_1985} is \(\sigma(M):=\min\{w(E\setminus F)/(r(M)-r(F)):F\text{ a flat with }r(F)<r(M)\}\), equal to the maximum value of a fractional base packing \citep{cunningham_testing_1984}. The \textbf{density} of a positive-rank \(M\) is \(w(E(M))/r(M)\), and its \textbf{cogirth-density ratio} is \(\lambda(M)\,r(M)/w(E(M))\), read in product form: ratio at most \(c\) means \(\lambda(M)\,r(M)\le c\,w(E(M))\). We work with loopless matroids of positive rank. Loops lie in every flat, so they change no \(\lambda_k\) and no strength. We delete them without comment, and density and the cogirth-density ratio always refer to the loopless matroid. The \textbf{contraction} \(M/Y\) is the matroid on \(E\setminus Y\) with rank function \(X\mapsto r(X\cup Y)-r(Y)\), a \textbf{contraction minor} is any \(M/Y\), and both carry the restriction of \(w\). The cocircuits of \(M/Y\) are the cocircuits of \(M\) disjoint from \(Y\) \citep{Oxley06}. For \(r_0\ge1\) and \(c\ge1\), a matroid is \textbf{\((r_0,c)\)-dense} if every contraction minor of rank at least \(r_0\) has cogirth-density ratio at most \(c\). Two nonloop elements \(e\) and \(e'\) are \textbf{parallel} if \(r(\{e,e'\})=1\), a \textbf{parallel class} is a maximal set of pairwise parallel elements, and \(\varepsilon(M)\) denotes the number of parallel classes of \(M\).

For \(0\le h<r(M)\), the \textbf{rank-\((r(M)-h)\) truncation} \(T_h(M)\) is the matroid on \(E\) whose independent sets are the independent sets of \(M\) of size at most \(r(M)-h\), with rank function \(X\mapsto\min(r(X),\,r(M)-h)\).

\section{Packing truncations}\label{sec:algorithm}

\subsection{Truncation packing}\label{truncation-packing}

\begin{algorithm}[H]\caption{Truncation packing.}\label{alg:truncation-packing}

\begin{coflatalgo}
\(\textsc{TruncationPacking}(M, w, \tau, h)\):
\+\\ compute a basic optimal fractional base packing
\(\alpha_1B_1,\dots,\alpha_tB_t\) of \(T_h(M)\)
\\ for every \(B_i\) and every \(Z\subseteq B_i\) with \(|Z|\le\tau\)
\+\\ if the residual \(M/(B_i\setminus Z)\) has positive rank
\+\\ compute its minimum-weight cocircuit
\-\-\\ return the lightest cocircuit found
\end{coflatalgo}

\end{algorithm}

A run of \cref{alg:truncation-packing} takes two parameters beyond the input. The \textbf{depth} \(h\) selects the truncation: the packed sets \(B_i\) are bases of \(T_h(M)\), independent sets of size \(r-h\). The \textbf{cap} \(\tau\) limits the retained set \(Z\), and a \textbf{branch} contracts \(B_i\setminus Z\) for one packed base \(B_i\) and one retained set \(Z\subseteq B_i\) with \(|Z|\le\tau\). The \textbf{trace} of a packed base \(B_i\) on a set \(A\subseteq E(M)\) is \(B_i\cap A\). A basic packing has at most \(n\) bases, so a run enumerates at most \((\tau+1)\,n^{1+\tau}\) branches.

Packing bases of \(M\) itself would cap the smallest trace only by \(\lambda(M)/\sigma(M)\), and \((r_0,c)\)-density with \(r_0>1\) places no bound on that ratio, since the strength sees contraction minors of rank below \(r_0\). The direct sum \(M_0\) of the graphic matroid of a complete graph on more than \(2^{p-1}\) vertices and the binary projective geometry \(PG(p-1,2)\), with unit weights, is a rank-\(p\) PGM, hence \((6(2^p+p+1),\,24/5)\)-dense (\cref{lem:pgm-density} below), while the flat spanned by the clique gives \(\lambda(M_0)/\sigma(M_0)>p/2\).

\begin{lemma}\label{lem:truncation-strength}

Let \(M\) be loopless of rank \(r\) with weights \(w\), let \(c>0\), and let \(0\le h<r\). If every contraction minor \(M/Y\) with \(r(M/Y)\ge h+1\) satisfies \(\lambda(M/Y)\le c\,w(E(M/Y))/r(M/Y)\), then \(\sigma(T_h(M))\ge\lambda(M)/c\).\end{lemma}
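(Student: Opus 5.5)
The plan is to unfold the definition of strength for the truncation and reduce each term to the density hypothesis applied to a single contraction minor. By definition,
\[
\sigma(T_h(M))=\min\Bigl\{\frac{w(E\setminus F)}{(r-h)-r_{T_h(M)}(F)} : F \text{ a flat of } T_h(M),\ r_{T_h(M)}(F)<r-h\Bigr\}.
\]
The first step is to identify these flats. Take a set $F$ with $r_{T_h(M)}(F)<r-h$. Then $r_{T_h(M)}(F)=r(F)$, and for every $e$ we also have $r_{T_h(M)}(F\cup\{e\})=\min(r(F\cup\{e\}),r-h)$. Since $r(F)<r-h$, adding $e$ raises the truncated rank exactly when it raises $r$. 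Hence the flats of $T_h(M)$ of rank below $r-h$ are precisely the flats $F$ of $M$ with $r(F)\le r-h-1$, and the truncated rank agrees with $r(F)$ on them. It therefore suffices to show, for every flat $F$ of $M$ with $r(F)\le r-h-1$,
\[
w(E\setminus F)\ \ge\ \frac{\lambda(M)}{c}\,\bigl(r-h-r(F)\bigr).
\]

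The second step applies the hypothesis with $Y=F$. The minor $M/F$ lives on $E\setminus F$ and has rank $r-r(F)\ge h+1$, so the hypothesis gives
\[
\lambda(M/F)\,(r-r(F))\ \le\ c\,w(E\setminus F).
\]
Because $F$ is a flat, $M/F$ is loopless. So the loopless convention of the preliminaries does not change the ground set, and $w(E(M/F))=w(E\setminus F)$.

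The third step compares cogirths. The cocircuits of $M/F$ are exactly the cocircuits of $M$ disjoint from $F$. The minor $M/F$ has positive rank, so it has at least one cocircuit, and every such cocircuit is a cocircuit of $M$. Hence $\lambda(M/F)\ge\lambda(M)$.

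Combining the three steps gives
\[
w(E\setminus F)\ \ge\ \frac{\lambda(M)}{c}\,(r-r(F))\ \ge\ \frac{\lambda(M)}{c}\,(r-h-r(F)),
\]
using $h\ge0$ and $\lambda(M)\ge0$. Dividing by the positive quantity $r-h-r(F)$ and minimizing over $F$ yields $\sigma(T_h(M))\ge\lambda(M)/c$. The factor $(r-r(F))/(r-h-r(F))\ge1$ is simply discarded, which is the source of slack in the bound.

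I expect no real obstacle here. The only point needing care is the first step, where the flats of the truncation below its top rank must be matched with the flats of $M$. The top-rank set $E$ itself is excluded from the minimum by the condition $r_{T_h(M)}(F)<r-h$, so it never enters the argument.
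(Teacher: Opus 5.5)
Your proposal is correct and matches the paper's proof step for step: you identify the flats of \(T_h(M)\) below its top rank with flats of \(M\), apply the hypothesis to \(M/F\), use \(\lambda(M)\le\lambda(M/F)\) because cocircuits of the contraction are cocircuits of \(M\), and weaken \(r-r(F)\) to \((r-h)-r(F)\). Your check that adding an element raises the truncated rank exactly when it raises \(r\) just spells out the paper's remark that the two rank functions agree below \(r-h\).
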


\begin{proof}

Let \(F\) be a flat of \(T_h(M)\) with \(r_{T_h(M)}(F)<r-h\). The rank functions of \(M\) and \(T_h(M)\) agree below \(r-h\), so \(r_M(F)=r_{T_h(M)}(F)\), the set \(F\) is a flat of \(M\), and \(N:=M/F\) is loopless with \(E(N)=E\setminus F\) and \(r(N)=r-r_M(F)\ge h+1\ge1\). Cocircuits of \(N\) are cocircuits of \(M\) (\cref{sec:prelim}), so \(\lambda(M)\le\lambda(N)\). The hypothesis at \(N\) gives \(\lambda(M)\le c\,w(E\setminus F)/r(N)\le c\,w(E\setminus F)/\bigl((r-h)-r_{T_h(M)}(F)\bigr)\), using \(r(N)=r-r_M(F)\ge(r-h)-r_{T_h(M)}(F)\). Minimizing over \(F\) proves the claim.\end{proof}

At \(h=0\) the lemma gives \(\sigma(M)\ge\lambda(M)/c\) for a \((1,c)\)-dense matroid, and \(\sigma(M)\le w(E(M))/r(M)\) through the flat \(\operatorname{cl}(\varnothing)\), so on contraction-closed hypotheses the cogirth-strength ratio \(\lambda(M)/\sigma(M)\) of Karger \citep{Karger98} and the cogirth-density ratio are bounded by the same constants.

With \cref{lem:truncation-strength} in place, the proof of the theorem below is the tree-packing scheme for minimum cut \citep{karger_minimum_2000, thorup_minimum_2008}, with the packing value entering through the averaging computation of Chekuri et al. \citep[Section 4]{chekuri_lp_2020}.

\begin{theorem}\label{thm:main}

Let \(M\) be a loopless \((r_0,c)\)-dense matroid of rank \(r\ge1\) on \(n\) elements, given by an independence oracle, with weights \(w:E(M)\to\mathbb R_{\ge0}\). Set \(d:=r_0-1+\lfloor c\rfloor\), and suppose a minimum-weight cocircuit of any contraction minor of \(M\) of positive rank at most \(d\) is computable in time \(\beta\). Then a minimum-weight cocircuit of \(M\) is computable deterministically in \(n^{O(c)}\,(\beta+n^{O(1)})\) time.\end{theorem}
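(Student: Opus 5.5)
We would run \cref{alg:truncation-packing} with depth \(h:=r_0-1\) and cap \(\tau:=\lfloor c\rfloor\), and argue that some branch keeps a fixed minimum-weight cocircuit \(C\) intact. Two facts are needed: some packed base has trace of size at most \(\lfloor c\rfloor\) on \(C\), and the corresponding residual has rank at most \(d\). If \(r<r_0\), then \(r\le d\) and we solve \(M\) directly in time \(\beta\). So assume \(r\ge r_0\), hence \(0\le h<r\).

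\textbf{Small trace.} Since \(M\) is \((r_0,c)\)-dense, every contraction minor of rank at least \(h+1=r_0\) has cogirth-density ratio at most \(c\). \Cref{lem:truncation-strength} then gives \(\sigma(T_h(M))\ge\lambda(M)/c\), so the optimal packing \(\alpha_1B_1,\dots,\alpha_tB_t\) has value \(\sum_i\alpha_i\ge\lambda(M)/c\). The load constraint gives
\[
\sum_i\alpha_i|B_i\cap C|\le w(C)=\lambda(M).
\]
If \(\lambda(M)>0\), averaging yields some \(i\) with \(|B_i\cap C|\le c\), and hence \(|B_i\cap C|\le\lfloor c\rfloor\). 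If \(\lambda(M)=0\) this argument fails, since the packing value may be zero. In that case the elements of weight \(0\) contain a cocircuit, which we would find by a direct check: test whether the positive-weight elements fail to span \(M\), which costs \(n^{O(1)}\) oracle calls. I would dispose of this case first. The case \(\sum_i\alpha_i=0\) with \(\lambda>0\) cannot occur, by the bound above.

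\textbf{The branch.} Take \(Z:=B_i\cap C\). Then \(B_i\setminus Z\) is disjoint from \(C\), so \(C\) is a cocircuit of \(M/(B_i\setminus Z)\). The minor therefore has positive rank, and its cogirth is \(\lambda(M)\). Its rank is
\[
r-|B_i\setminus Z|=r-(r-h)+|Z|\le r_0-1+\lfloor c\rfloor=d,
\]
because \(B_i\) is independent of size \(r-h\). The subroutine thus returns weight at most \(\lambda(M)\). Every branch outputs a cocircuit of a contraction minor, which is a cocircuit of \(M\), so its weight is at least \(\lambda(M)\). Hence the lightest output is optimal. Every branch residual has rank \(r-|B_i|+|Z|\le d\), so the subroutine applies to each branch.

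\textbf{Running time.} I expect this to be the main obstacle. We must compute a basic optimal fractional base packing of \(T_h(M)\) deterministically in polynomial time from an independence oracle. An independence oracle for \(T_h(M)\) is immediate: it accepts a set that is \(M\)-independent and has size at most \(r-h\). For the packing itself, I would invoke Cunningham's strongly polynomial matroid base packing algorithm \citep{cunningham_optimal_1985}, or alternatively the ellipsoid method with the matroid-intersection/greedy separation oracle. Either produces a vertex solution with at most \(n\) bases, via a Carathéodory reduction. The enumeration then consists of at most \((\tau+1)n^{1+\tau}=n^{O(c)}\) branches. Each branch costs \(\beta\) plus polynomial overhead for the contraction oracle and for the positive-rank test, which gives the claimed total of \(n^{O(c)}(\beta+n^{O(1)})\).
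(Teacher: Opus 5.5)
Your proposal is correct and follows the paper's proof essentially step for step. Like the paper, you handle the zero-weight case and the case \(r<r_0\) directly, apply \cref{lem:truncation-strength} at depth \(h=r_0-1\), and average over the optimal packing of \(T_h(M)\) to obtain a base with trace at most \(\lfloor c\rfloor\) on \(C\), then take the branch \(Z=B_i\cap C\), whose residual has rank \(h+|Z|\le d\) and contains \(C\) as a cocircuit; the only difference is that for the packing step the paper cites Cunningham's membership-testing algorithm and Quanrud rather than the attack-and-reinforcement paper, which changes nothing since a basic optimal packing with at most \(n\) bases is polynomial-time computable either way.
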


\begin{proof}

Put \(h:=r_0-1\). If \(E_0:=\{e:w_e=0\}\) satisfies \(r(E\setminus E_0)\le r-1\), we output a cocircuit inside the weight-\(0\) \(1\)-cocycle \(E\setminus\operatorname{cl}(E\setminus E_0)\), and if \(r\le h\), we solve \(M\) itself in time \(\beta\). Otherwise \(\lambda(M)>0\), since a \(1\)-cocycle of weight \(0\) would lie in \(E_0\), and \(\sigma(T_h(M))\ge\lambda(M)/c>0\) by \((r_0,c)\)-density and \cref{lem:truncation-strength}.

We compute the packing step of \cref{alg:truncation-packing}. A basic optimal solution of the packing linear program, with a variable per base of \(T_h(M)\) and a load constraint per element, has \(t\le n\) bases and is computable in polynomial time \citep{cunningham_testing_1984, quanrud_faster_2024}, and the rank oracle of \(T_h(M)\) is immediate from that of \(M\). Let \(C^\star\) be a minimum-weight cocircuit. Drawing \(B_i\) with probability \(\alpha_i/\sigma(T_h(M))\) gives
\[
\mathbb E\,|B_i\cap C^\star|\ =\ \frac{1}{\sigma(T_h(M))}\sum_{e\in C^\star}\ \sum_{i:e\in B_i}\alpha_i\ \le\ \frac{w(C^\star)}{\sigma(T_h(M))}\ \le\ c,
\]
so some \(B_i\) has \(|B_i\cap C^\star|\le\lfloor c\rfloor\). We run \cref{alg:truncation-packing} with cap \(\tau=\lfloor c\rfloor\) and depth \(h\), for at most \((\lfloor c\rfloor+1)\,n^{1+\lfloor c\rfloor}\) branches. Each residual is a contraction minor of rank \(h+|Z|\le d\), and each residual of positive rank is solved in time \(\beta\) by the assumed base case.

Every branch returns a cocircuit of a contraction minor of \(M\), hence a cocircuit of \(M\) (\cref{sec:prelim}) of weight at least \(\lambda(M)\). In the branch with \(Z:=B_i\cap C^\star\), the contracted set \(B_i\setminus Z\) is independent and disjoint from \(C^\star\), so removing \(C^\star\) still drops the rank of the residual, which therefore has a cocircuit of weight at most \(w(C^\star)\). Each branch costs \(\beta+n^{O(1)}\).\end{proof}

\begin{corollary}\label{cor:points}

Let \(M\) be a loopless \((r_0,c)\)-dense matroid of rank \(r\ge1\) on \(n\) elements, given by an independence oracle, with weights \(w:E(M)\to\mathbb R_{\ge0}\), and set \(d:=r_0-1+\lfloor c\rfloor\). Put \(m:=\max\{\varepsilon(N):N\text{ a contraction minor of }M,\ r(N)\le d\}\). Then a minimum-weight cocircuit of \(M\) is computable deterministically in \(m^{O(r_0)}\,n^{O(c)}\) time.\end{corollary}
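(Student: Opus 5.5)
We derive \cref{cor:points} from \cref{thm:main}. The only missing ingredient is a base-case solver: we must compute a minimum-weight cocircuit of any contraction minor \(N=M/Y\) with \(1\le r(N)\le d\) in time \(\beta=m^{O(r_0)}\,n^{O(c)}\) (more precisely \(m^{O(d)}n^{O(1)}\)). Substituting this \(\beta\) into \(n^{O(c)}(\beta+n^{O(1)})\) then gives the claimed bound.

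The base case is brute-force enumeration of hyperplanes through parallel classes. Every cocircuit of \(N\) is \(E(N)\setminus H\) for a hyperplane \(H\) of \(N\). The procedure runs as follows.
\begin{enumerate}
\item Compute the parallel classes of \(N\), after discarding the loops of \(N\) (the elements of \(\operatorname{cl}_M(Y)\setminus Y\)), using the rank oracle \(r_N(X)=r(X\cup Y)-r(Y)\). Pairwise tests cost \(O(n^2)\) oracle calls, and there are \(\varepsilon(N)\le m\) classes.
\item Note that every hyperplane \(H\) is spanned by \(r(N)-1\) of its elements, and these may be taken from distinct parallel classes. Choosing one representative per class, the hyperplanes are exactly the sets \(\operatorname{cl}_N(S)\) for independent \(S\) of size \(r(N)-1\) among the representatives.
\item Enumerate all such \(S\), at most \(m^{r(N)-1}\le m^{d}\) of them. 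For each, compute \(\operatorname{cl}_N(S)\) with \(n\) oracle calls and record the weight \(w(E(N)\setminus\operatorname{cl}_N(S))\).
\item Return the lightest complement. It is a cocircuit, so it attains \(\lambda(N)\).
\end{enumerate}
Since the weights are nonnegative, this minimum over hyperplane complements equals the minimum cocircuit weight. The total cost is \(\beta=m^{d}\,n^{O(1)}\).

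The step needing care is the exponent bookkeeping, since \(m^{d}=m^{r_0-1}\,m^{\lfloor c\rfloor}\) and the statement allows only \(m^{O(r_0)}\). Here we use \(m\le n\), because the parallel classes partition a subset of \(E\). Hence \(m^{\lfloor c\rfloor}\le n^{c}\), and
\[
\beta\le m^{r_0-1}\,n^{c+O(1)}.
\]
Plugging into \cref{thm:main} gives \(n^{O(c)}\bigl(m^{r_0-1}n^{c+O(1)}+n^{O(1)}\bigr)=m^{O(r_0)}\,n^{O(c)}\), using \(c\ge1\) to absorb the \(n^{O(1)}\) terms.

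Two minor points remain. First, \(m\) is defined as a maximum over all contraction minors of rank at most \(d\), so it bounds \(\varepsilon(N)\) for every residual \(N\) arising in the branches of \cref{thm:main}; each residual has rank \(h+|Z|\le d\). Second, the algorithm itself never needs \(m\), \(r_0\) or \(c\) beyond what \cref{thm:main} already uses, since enumeration over \(S\) simply runs over whatever classes exist.
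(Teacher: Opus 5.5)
Your proposal is correct and takes the paper's approach: it instantiates the base case of \cref{thm:main} by enumerating the closures of independent sets of size \(r(N)-1\), one element per parallel class, and then uses \(m\le n\) to split \(m^{d}\) into \(m^{r_0-1}n^{\lfloor c\rfloor}\). The only difference is cosmetic: the paper collapses each class into a single element carrying the total weight and obtains the additive bound \(\beta\le m^{d-1+O(1)}+n^{O(1)}\), whereas you keep representatives and compute closures in \(N\), obtaining the multiplicative bound \(m^{d}n^{O(1)}\), which works just as well.
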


\begin{proof}

We instantiate the base case of \cref{thm:main}. Given a contraction minor of positive rank at most \(d\), remove loops and collapse each parallel class to one element carrying its total weight: flats are closed under parallelism, so cocircuits are unions of parallel classes and an optimum uncollapses. The collapsed minor has at most \(m\) elements and rank \(\rho\le d\), and its cocircuit complements are the closures of its independent sets of size \(\rho-1\), enumerable in \(m^{d-1+O(1)}\) oracle calls. Hence \(\beta\le m^{d-1+O(1)}+n^{O(1)}\), and \cref{thm:main} runs in \(n^{O(c)}(\beta+n^{O(1)})=m^{O(r_0)}\,n^{O(c)}\) time by \(m\le n\).\end{proof}

Since \(m\le n\) in all cases, \cref{cor:points} computes a minimum-weight cocircuit of any loopless \((r_0,c)\)-dense matroid in \(n^{O(r_0+c)}\) time. The content of \(r_0\) is the split \(m^{O(r_0)}\,n^{O(c)}\).

\subsection{Parameter-free variant}\label{parameter-free-variant}

\begin{algorithm}[H]\caption{Parameter-free truncation packing.}\label{alg:parameter-free}

\begin{coflatalgo}
\(\textsc{ParameterFree}(M, w)\):
\+\\ collapse the parallel classes of \(M\)
\\ if \(E_0:=\{e:w_e=0\}\) has \(r(E\setminus E_0)<r\)
\+\\ return a cocircuit inside
\(E\setminus\operatorname{cl}(E\setminus E_0)\)
\-\\ form the closure arm, and one packing arm for each depth
\(h\in\{0,\dots,r-2\}\)
\\ for budget \(b=1,2,4,\dots\)
\+\\ run every arm from scratch for at most \(b\) steps
\\ if some arm halted
\+\\ return its answer, uncollapsed
\\ 
\-\-\\ the closure arm:
\+\\ record \(E\setminus\operatorname{cl}(I)\) for every independent set
\(I\) of size \(r-1\)
\\ halt with the lightest recorded cocircuit
\\ 
\-\\ the packing arm at depth \(h\):
\+\\ maintain the witness, the lightest cocircuit found, of weight \(\mu\),
initially \(\infty\)
\\ for cap \(s=1,2,\dots,r-h-1\)
\+\\ run \(\textsc{TruncationPacking}(M,w,s,h)\), recursing on the residuals
\\ if \((s+1)\,\sigma(T_h(M))>\mu\)
\+\\ halt with the witness
\end{coflatalgo}

\end{algorithm}

The main algorithm \cref{alg:parameter-free} hedges over the unknown depth and cap with \(r\) independent procedures called \textbf{arms}, the closure arm and one packing arm per depth \(h\in\{0,\dots,r-2\}\), run in rounds under a doubling step budget, and a packing arm halts once the \textbf{certificate} \((s+1)\,\sigma(T_h(M))>\mu\) validates its witness. A \textbf{step} is one oracle call or one unit of computation. Collapsing and uncollapsing are as in the proof of \cref{cor:points}.

\begin{lemma}\label{lem:correctness}

Given a loopless matroid \(M\) of positive rank by an independence oracle, with weights \(w:E(M)\to\mathbb R_{\ge0}\), \cref{alg:parameter-free} returns a minimum-weight cocircuit of \(M\).\end{lemma}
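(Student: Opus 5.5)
The proof has three parts: the algorithm halts, whatever it returns is a cocircuit, and the returned cocircuit has minimum weight.

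\textbf{Preliminary cases and halting.} The collapse step is harmless. As in the proof of \cref{cor:points}, cocircuits are unions of parallel classes, so an optimum of the collapsed matroid uncollapses to an optimum of \(M\). We may therefore assume \(M\) is simple. If \(r(E\setminus E_0)<r\), the returned set lies inside a weight-\(0\) \(1\)-cocycle, so it is optimal. Otherwise \(\lambda(M)>0\). The closure arm always halts after finitely many steps, because it enumerates finitely many independent sets. It is also correct: every hyperplane is the closure of some independent set of size \(r-1\), and cocircuits are exactly the complements of hyperplanes. So the doubling loop terminates at the latest in the round where \(b\) reaches the closure arm's running time. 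Since every arm is rerun from scratch, the answer returned is the output of a completed arm. It therefore suffices to show that every packing arm that halts returns a minimum-weight cocircuit.

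\textbf{Packing arms: validity of the witness.} First, every candidate a packing arm records is a genuine cocircuit of \(M\). Each branch produces a cocircuit of a contraction minor \(M/(B_i\setminus Z)\), found recursively. Cocircuits of a contraction minor are cocircuits of \(M\) (\cref{sec:prelim}), and this composes through the recursion, so by induction on rank each recursive call returns a cocircuit of its own minor. Hence \(\mu\ge\lambda(M)\) whenever \(\mu<\infty\). I also need the recursion to be well-founded. A branch contracts \(B_i\setminus Z\) with \(|B_i\setminus Z|\ge r-h-s\ge 1\) because \(s\le r-h-1\), so every residual has strictly smaller rank. In rank \(1\), the only arm is the closure arm, since the depth range \(\{0,\dots,r-2\}\) is empty.

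\textbf{Packing arms: the certificate implies optimality.} This is the key step. Suppose the arm at depth \(h\) halts after cap \(s\) with \((s+1)\sigma(T_h(M))>\mu\), and suppose for contradiction that \(\lambda(M)<\mu\). Let \(C^\star\) be a minimum-weight cocircuit. I would reuse the averaging computation from the proof of \cref{thm:main}: drawing \(B_i\) with probability \(\alpha_i/\sigma(T_h(M))\) gives
\[
\mathbb E\,|B_i\cap C^\star|\le \frac{w(C^\star)}{\sigma(T_h(M))}=\frac{\lambda(M)}{\sigma(T_h(M))}<\frac{\mu}{\sigma(T_h(M))}<s+1 .
\]
This requires \(\sigma(T_h(M))>0\), which holds because all weights on a spanning set are positive after the zero-weight case is excluded. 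Since \(\sigma\) is the optimum value, the \(\alpha_i\) sum to \(\sigma\) and the distribution is well defined. So some packed base has \(|B_i\cap C^\star|\le s\). The branch with \(Z:=B_i\cap C^\star\) was enumerated by \(\textsc{TruncationPacking}(M,w,s,h)\). Its contracted set \(B_i\setminus Z\) is disjoint from \(C^\star\), so \(C^\star\) is still a \(1\)-cocycle of the residual. That residual has positive rank, since \(C^\star\) is nonempty and stays nonloop. By induction, the recursive call returns a cocircuit of weight at most \(w(C^\star)=\lambda(M)\). Therefore \(\mu\le\lambda(M)\), a contradiction. Combined with \(\mu\ge\lambda(M)\), the witness is optimal.

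\textbf{Main obstacle.} The step I expect to need the most care is this certificate argument under recursion. The recursive calls must themselves be exact, and this should be handled by strong induction on rank, applying the lemma to each residual. The monotone update of \(\mu\) across the caps \(1,\dots,s\) must be accounted for, which is fine since \(\mu\) only decreases. I must also check that the cap-\(s\) enumeration includes every \(Z\) with \(|Z|\le s\), and that the probability argument uses the packing value exactly rather than a lower bound on it.
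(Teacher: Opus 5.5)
Your proof is correct and follows the paper's argument: the collapse and zero-weight preliminaries, well-founded induction on rank so that recursive calls are exact, \(\mu\ge\lambda(M)\) because every recorded set is a cocircuit of \(M\), and the averaging argument showing that a completed cap-\(s\) run with \((s+1)\,\sigma(T_h(M))>\mu\) visited a branch whose contracted set is disjoint from \(C^\star\). The differences are cosmetic. You derive \(\sigma(T_h(M))>0\) from the positive-weight elements spanning \(M\), whereas the paper reads it off the certificate via \((s+1)\,\sigma(T_h(M))>\mu\ge\lambda(M)>0\). You also phrase the last step as a contradiction rather than directly.
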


\begin{proof}

The collapse changes no cocircuit weight and uncollapsing recovers an optimum, as in the proof of \cref{cor:points}. The zero-weight test settles \(\lambda(M)=0\), as the preliminary case in the proof of \cref{thm:main}, so assume \(\lambda(M)>0\). The recursion is well founded: the caps stay below \(r-h\), so every contracted set is nonempty and every recursive call has strictly smaller rank. We assume by induction on the rank that the recursive calls are exact. Every set any arm records is a cocircuit of \(M\), directly or as a cocircuit of a contraction minor (\cref{sec:prelim}), so \(\mu\ge\lambda(M)\) throughout. The closure arm always halts and its answer is exact, since every cocircuit of \(M\) is the complement of a hyperplane and every hyperplane is the closure of an independent set of size \(r-1\). Suppose a packing arm at depth \(h\) halts after completing the cap \(s\). The certificate forces \(\sigma(T_h(M))>0\), since \(\mu\ge\lambda(M)>0\). The averaging argument of \cref{thm:main} then gives a packed base whose trace on a minimum-weight cocircuit \(C^\star\) is an integer at most \(w(C^\star)/\sigma(T_h(M))\le\mu/\sigma(T_h(M))<s+1\), hence at most \(s\), so the completed run visited a branch that contracted a set disjoint from \(C^\star\). The residual of that branch has positive rank, since removing \(C^\star\) drops its rank, and the exact recursive call on it found a cocircuit of weight at most \(w(C^\star)\), so \(\mu=\lambda(M)\) and the witness is optimal. \cref{alg:parameter-free} therefore always returns a minimum-weight cocircuit.\end{proof}

\begin{proposition}\label{prop:overhead}

Let \(M\) be a loopless \((r_0,c)\)-dense matroid of rank \(r\ge1\) on \(n\) elements, given by an independence oracle, with weights \(w:E(M)\to\mathbb R_{\ge0}\), and set \(d:=r_0-1+\lfloor c\rfloor\) and \(m:=\max\{\varepsilon(N):N\text{ a contraction minor of }M,\ r(N)\le d\}\). Then \cref{alg:parameter-free} computes a minimum-weight cocircuit of \(M\) in \(r\,n^{1+\lfloor c\rfloor}\,\bigl(m^{d-1+O(1)}+n^{O(1)}\bigr)\) time.\end{proposition}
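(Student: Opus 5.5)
Correctness is \cref{lem:correctness}, so only the running time needs proof. The plan is to exhibit one arm that halts within a bounded number of steps $T^\star$. Then the doubling schedule finishes at the first budget $b\ge T^\star$. Each round runs $r$ arms for at most $b$ steps, so the total cost is at most $r\sum_{b'\le 2T^\star}b'=O(r\,T^\star)$, plus the polynomial cost of collapsing and of the zero-weight test. If $r\le h+1$ for $h:=r_0-1$, I would instead use the closure arm. The collapsed matroid then has rank at most $r_0\le d$ and at most $m$ elements, so this arm halts in $m^{d-1+O(1)}$ steps.

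In the main case $r\ge r_0+1$ I would take the packing arm at depth $h=r_0-1$, which is one of the depths $0,\dots,r-2$. By $(r_0,c)$-density and \cref{lem:truncation-strength}, and since $\lambda(M)>0$ after the zero test, we have $\sigma(T_h(M))\ge\lambda(M)/c$. Consider cap $s=\lfloor c\rfloor$, or $s=r-h-1$ if that is smaller. By the proof of \cref{thm:main} (or of \cref{lem:correctness}), once cap $s$ completes we have $\mu=\lambda(M)$. Then $(s+1)\sigma(T_h(M))\ge(\lfloor c\rfloor+1)\lambda(M)/c>\lambda(M)=\mu$, so the certificate fires. When $s=r-h-1<\lfloor c\rfloor$, some packed base has trace on $C^\star$ at most $s$ only if $\lfloor c\rfloor\le s$, so this case needs separate care. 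There I would argue that branches with $|Z|\le r-h-1$ together with the full-rank residual still cover the optimum, or else fall back on the closure arm, since the rank $r\le h+\lfloor c\rfloor=d$ is then small anyway. The arm therefore runs caps $1,\dots,\lfloor c\rfloor$, for a total of $\sum_{s\le\lfloor c\rfloor}(s+1)n^{1+s}=O(n^{1+\lfloor c\rfloor})$ branches (treating $c$ as a constant), each after one polynomial-time packing computation.

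The main obstacle is bounding the cost of the recursive calls on the residuals, since the arm recurses using \cref{alg:parameter-free} rather than a base-case oracle. Each residual $N=M/(B_i\setminus Z)$ is a contraction minor of rank $h+|Z|\le d$. I plan to bound the recursive call by the closure arm of $N$. After collapsing, $N$ has at most $m$ elements and rank at most $d$, so that arm halts within $m^{d-1+O(1)}$ steps. By the doubling argument the whole recursive call then costs $O(\mathrm{rank}(N)\cdot m^{d-1+O(1)})+n^{O(1)}=m^{d-1+O(1)}+n^{O(1)}$. This holds regardless of what deeper recursion its packing arms would attempt, because the round terminates as soon as any arm halts. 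The point to check is that this cutoff is legitimate even though the inner packing arms recurse further: they are simply stopped at the budget, so their cost is charged to the budget and never exceeds it.

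Combining the pieces gives $T^\star\le O(n^{1+\lfloor c\rfloor})\bigl(m^{d-1+O(1)}+n^{O(1)}\bigr)$. Multiplying by the factor $r$ from the arm schedule yields the bound $r\,n^{1+\lfloor c\rfloor}\bigl(m^{d-1+O(1)}+n^{O(1)}\bigr)$ stated in \cref{prop:overhead}.
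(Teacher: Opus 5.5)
Your proposal is correct and follows the paper's proof. For $r\le d$ the closure arm halts within $m^{d-1+O(1)}$ steps; the alternative you float there, about branches with $|Z|\le r-h-1$ covering the optimum, is not needed, and your closure-arm fallback is exactly the paper's treatment. For $r>d$, the packing arm at depth $r_0-1$ certifies by cap $\lfloor c\rfloor$ (the paper names the possibly earlier cap $\max\{1,\lfloor\lambda(M)/\sigma(T_{r_0-1}(M))\rfloor\}$), each recursive call is charged to its own closure arm at $m^{d-1+O(1)}+n^{O(1)}$, and the doubling accounting and $O(n^{1+\lfloor c\rfloor})$ branch count match the paper.
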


\begin{proof}

The collapse and the zero-weight test are polynomial, so assume \(\lambda(M)>0\). If \(r\le d\), then \(M\) collapses to at most \(m\) elements, the closure arm halts within \(m^{r-1+O(1)}\) steps, and the claimed bound holds. Now assume \(r>d\).
The algorithm runs \(r\) arms under doubling budgets, so its total cost is at most \(4r\) times the step count at which the first arm halts, and it suffices to bound the packing arm at depth \(r_0-1\). As in the proof of \cref{thm:main}, \(\sigma(T_{r_0-1}(M))\ge\lambda(M)/c>0\). The sweep reaches the cap \(s^*:=\max\{1,\lfloor\lambda(M)/\sigma(T_{r_0-1}(M))\rfloor\}\le\lfloor c\rfloor\), since \(r>d\) gives \(r-(r_0-1)-1\ge\lfloor c\rfloor\). Once the sweep completes the cap \(s^*\), the averaging argument has placed a branch disjoint from an optimum, exact recursion has set \(\mu=\lambda(M)\), and the certificate holds. The cost of consecutive caps grows by factors of \(n\), so the whole sweep costs at most twice the run at cap \(s^*\), which spawns at most \((\lfloor c\rfloor+1)\,n^{1+\lfloor c\rfloor}\) recursive calls on contraction minors of rank at most \((r_0-1)+s^*\le d\). A recursive call on a minor of rank \(\rho\le d\) collapses it to at most \(m\) elements, runs at most \(\rho\le m\) arms, and its closure arm halts within \(m^{\rho-1+O(1)}\) steps, so the call costs \(m^{d-1+O(1)}+n^{O(1)}\). The arm therefore costs \(n^{1+\lfloor c\rfloor}\,(m^{d-1+O(1)}+n^{O(1)})\).\end{proof}

Running the single arm at depth \(r_0-1\) with cap \(\lfloor c\rfloor\) is exactly the run analyzed in \cref{thm:main,cor:points}, so the factor \(r\) is the whole price of not knowing \(r_0\) and \(c\).

\section{Perturbed graphic matroids}\label{sec:pgm}

We write \(M(A)\) for the matroid on the columns of a matrix \(A\), and call the row space of \(A\) its \textbf{cocycle space}: the support of any nonzero cocycle-space vector is a cocycle and contains a cocircuit, and the cocycle space of a contraction \(M(A)/Y\) consists of the cocycle-space vectors vanishing on \(Y\), restricted to the remaining columns \citep{Oxley06}.

Represent the rank-\(p\) PGM by \(A(G)+P\) with \(A(G)\in\FF_2^{V(G)\times E(G)}\) and \(P\in\FF_2^{V(G)\times E(G)}\) of rank at most \(p\), and write \(P=UW\) with \(U\) having \(p\) columns. The block matrix \(\bigl(\begin{smallmatrix}A(G)&U\\W&I_p\end{smallmatrix}\bigr)\), contracted on its last \(p\) columns, represents \(A(G)+P\) by Schur-complement elimination. The minors of rank-\(p\) PGMs form a contraction-closed class, and the block form passes to them:

\begin{lemma}\label{lem:closure}

Every minor \(N\) of a rank-\(p\) PGM is \(M(B)/L\) for a binary matrix \(B=\bigl(\begin{smallmatrix}A(G)&U\\W&Q\end{smallmatrix}\bigr)\), computable from a representation of the PGM, in which \(A(G)\) is the vertex--edge incidence matrix of a graph with loops permitted as zero columns, the rows of \(A(G)\) are the \textbf{vertex rows}, the blocks \(W\) and \(Q\) occupy \(p\) appended rows, and \(L\) is the set of the \(p\) appended columns.\end{lemma}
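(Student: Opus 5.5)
We prove the lemma by induction on the number of deletions and contractions that produce \(N\) from the PGM. The invariant is the block form of the statement, \(N=M(B)/L\) with \(B=\bigl(\begin{smallmatrix}A(G)&U\\W&Q\end{smallmatrix}\bigr)\). Here \(A(G)\) is an incidence matrix of a graph whose loops may be zero columns, exactly \(p\) rows are appended, and \(L\) is the set of the \(p\) appended columns.

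For the base case we take the block matrix displayed before the lemma, with \(Q=I_p\). Its contraction on \(L\) represents \(A(G)+P\) by Schur-complement elimination. Concretely, after row operations the columns of \(L\) become unit vectors on the appended rows, and contracting them deletes those rows. The vertex rows then read \(A(G)+UW\) on the edge columns.

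Since every minor is \(M\backslash D/C\) with \(D,C\) disjoint and the operations commute, it suffices to treat deleting one edge column and contracting one edge column.

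Deleting an element \(e\) is easy. We delete column \(e\) from \(B\), and \(G\) loses the edge \(e\). Deletion commutes with contraction of \(L\), so \((M(B)/L)\backslash e=M(B\backslash e)/L\), and the form is preserved.

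Contraction is the main step, and its difficulty is that the form must survive even though column \(e\) mixes vertex and appended rows. We have \(N/e=M(B)/(L\cup\{e\})\), so we must pivot on \(e\) while keeping the vertex block graphic and the appended part at \(p\) rows and columns.

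If \(e\) is a zero column of \(B\), it is a loop, and contracting it equals deleting it. If column \(e\) of \(A(G)\) is nonzero, it is a non-loop edge \(uv\). We add row \(u\) to row \(v\) over \(\FF_2\) and then delete row \(u\), which contracts the edge \(uv\) in \(G\). Edge columns stay incidence vectors, since edges parallel to \(uv\) become zero columns, which are the allowed loops. The \(U\) block changes by one row operation, which is harmless. Column \(e\) is now nonzero only on the appended rows, where its entry is the column \(w_e\) of \(W\). To clear that part, for each appended row with a \(1\) in column \(e\) we add the deleted row \(u\), i.e.\ we pivot on the entry \((u,e)\) before deleting row \(u\). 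This modifies only \(W\) and \(Q\), because row \(u\) restricted to the edge columns is the incidence row of \(u\). Done in this order, column \(e\) becomes the unit vector at row \(u\), and contracting \(e\) removes row \(u\) and column \(e\). What remains is incidence rows for \(G/e\) with blocks \(W',Q'\) on the \(p\) appended rows.

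If column \(e\) of \(A(G)\) is zero but its entries in \(W\) are not, we pivot on an appended row \(j\) with a \(1\) in column \(e\). We clear column \(e\) from the other appended rows and from no vertex rows, since those entries are already zero. We then contract \(e\), which drops row \(j\). This leaves \(p-1\) appended rows. We restore the count by appending a zero row, which changes no matroid. Every step is an explicit row operation, so \(B\) is computable from the representation. The main obstacle is the bookkeeping in the nonzero-column case, where we must check that row \(u\) carries both its incidence entries and its \(U\)-entries so the elimination stays consistent.
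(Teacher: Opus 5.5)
Your proof is correct and follows the paper's argument. The shared steps are the Schur-form base case, deletion as column deletion, and the same three-way split for contraction: a zero column is a loop, a nonzero \(A(G)\)-part is handled by pivoting in a vertex row (merging \(u\) into \(v\)), and otherwise you pivot in an appended row and pad with a zero row. Your version spells out the vertex-row pivot that the paper states in one line, checking that edges parallel to \(uv\) become zero columns and that the other row operations touch only the \(U\), \(W\), \(Q\) blocks.
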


\begin{proof}

The Schur form above expresses the PGM itself, and deleting an element deletes its column. For a contraction by an element \(e\) of \(N\), put the column of \(e\) into one of three cases. If its \(A(G)\)-part is nonzero, pivot in a vertex row. Deleting the pivot row and column replaces \(A(G)\) by the incidence matrix of the contracted graph, and elsewhere the row operations affect only the \(U\), \(W\), and \(Q\) blocks. If its \(A(G)\)-part is zero but its lower part is nonzero, pivot in an appended row and delete the pivot row and column, padding with a zero row to restore \(p\) appended rows. If the whole column is zero, then \(e\) is a loop and contraction equals deletion. Each case preserves the block form and the incidence form of the top-left block.\end{proof}

The bound of \citep[Lemma 3.6]{geelen_computing_2018} does not give \((r_0,c)\)-density directly, since it applies to connected instances of their labelled-graph encoding rather than to arbitrary contraction minors, and it measures density against the vertex count rather than the rank.

\begin{lemma}\label{lem:pgm-density}

Let \(N\) be a loopless minor of a rank-\(p\) PGM, with weights \(w\), and write \(r:=r(N)\).

\begin{enumerate}
\def\labelenumi{\arabic{enumi}.}
\tightlist
\item
  If \(r\ge 2^p+p+2\), then \(\lambda(N)\le 4\,w(E(N))/(r-2^p-p-1)\).
\item
  If \(r\ge 4p+2\), then \(\lambda(N)\le 4(p+1)\,w(E(N))/r\).
\end{enumerate}

\end{lemma}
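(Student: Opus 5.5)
The plan is to produce small cocycles of \(N\) directly from the vertex rows of the block form in \cref{lem:closure}. Write \(N=M(B)/L\) with \(B=\bigl(\begin{smallmatrix}A(G)&U\\W&Q\end{smallmatrix}\bigr)\). By the description of the cocycle space of a contraction (\cref{sec:pgm}), every row-space vector of \(B\) that vanishes on the \(p\) columns \(L\), restricted to \(E(N)\), lies in the cocycle space of \(N\). Its support, if nonempty, is a cocycle and hence contains a cocircuit.

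For a vertex set \(S\), the sum of the vertex rows over \(S\) has \(L\)-part \(\sum_{v\in S}U_v\), where \(U_v\in\FF_2^p\) is the row of \(U\) at \(v\). Its \(E(N)\)-part is \(\sum_{v\in S}A(G)_v\), whose support is the edge cut \(\delta_G(S)\subseteq\bigcup_{v\in S}\delta(v)\). So whenever \(\sum_{v\in S}U_v=0\) and \(\delta_G(S)\ne\varnothing\), we get \(\lambda(N)\le\sum_{v\in S}\deg_w(v)\), where \(\deg_w(v)\) is the total weight of the non-loop edges at \(v\). Since \(\sum_v\deg_w(v)\le2w(E(N))\), it remains to find such sets made of light vertices. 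To relate the vertex count to \(r\), note \(r(N)\le\operatorname{rank}(A(G))+p\le|V(G)|-\kappa(G)+p\), where \(\kappa(G)\) is the number of components. It is cleaner to restrict to \(G'\), obtained by discarding isolated vertices and vertices of degree zero, which keeps \(|V(G')|-\kappa(G')\ge r-p\).

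For part 1, partition the vertices of \(G'\) into the at most \(2^p\) classes of equal \(U_v\). Two vertices in the same class give \(S=\{u,v\}\) with zero \(L\)-part. Greedily pair vertices within classes. At most one vertex per class is left unpaired, so we obtain at least \((|V(G')|-2^p)/2\) disjoint pairs. Their stars have total weight at most \(2w(E(N))\), so the lightest pair weighs at most \(4w(E(N))/(|V(G')|-2^p)\). The remaining issue is that \(\delta_G(\{u,v\})\) may be empty, which happens exactly when \(\{u,v\}\) is a whole component. I expect this to be the main technical obstacle. The plan is to pair vertices within each class so that no pair forms a component. A two-vertex component can be re-paired with a vertex outside it, or else discarded. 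Discarded vertices are charged against the \(-\kappa\) term and the extra \(-1\) in the denominator \(r-2^p-p-1\), which I expect to absorb them.

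For part 2, take the \(p+1\) vertices of \(G'\) with smallest \(\deg_w\). Their vectors \(U_v\) are dependent in \(\FF_2^p\), so some nonempty subset \(S\) has zero \(L\)-part, and \(w(\delta_G(S))\le(p+1)\cdot2w(E(N))/|V(G')|\). Using \(|V(G')|\ge r-p+\kappa\ge r/2\) when \(r\ge4p+2\) (indeed \(r-p\ge r/2\) iff \(r\ge2p\)), this gives the bound \(4(p+1)w(E(N))/r\). Again we need \(\delta_G(S)\ne\varnothing\), that is, \(S\) must not be a union of components. To guarantee it, I would pick the \(p+1\) light vertices from a pool in which no component is entirely included. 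For example, first choose from each component one vertex to exclude, then take the \(p+1\) lightest of the rest. The pool then has size at least \(|V(G')|-\kappa\ge r-p\), and the hypothesis \(r\ge4p+2\) leaves room to lose these vertices while keeping the pool of size at least \(r/2\).
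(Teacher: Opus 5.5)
Your argument is correct, and its core matches the paper's. You sum vertex rows of the block form of \cref{lem:closure} so that their $U$-parts cancel: a pigeonhole pair in part 1, a linear dependence among $p+1$ light rows in part 2. You then charge the resulting cocycle against $\sum_v\deg_w(v)\le 2w(E(N))$. The genuine difference is how you ensure the combination is nonzero on $E(N)$.

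The paper fixes a set $R$ of $\operatorname{rank}(B)\ge r$ linearly independent rows of $B$ and picks $J$ among its at least $r-p$ vertex rows. Then $x^\top B\neq 0$ is automatic and no graph structure is used. You instead take all non-isolated vertices and detect degeneracy as $\delta_G(S)=\varnothing$, i.e.\ $S$ a union of components. You count through $r\le |V(G')|-\kappa(G')+p$, which you should justify by $r(N)\le r_{M(B)}(E(N))=\operatorname{rank}\bigl(\begin{smallmatrix}A(G)\\W\end{smallmatrix}\bigr)\le\operatorname{rank}(A(G))+p$.

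The part-1 repair you leave as an expectation does go through. Within a class $X$, the forbidden pairs (two-vertex components) form a matching. $K_X$ minus a matching has a matching of size $\lfloor |X|/2\rfloor$ unless $X$ is itself such a component. So at most $2^p+\kappa(G')$ vertices stay unpaired, and there are at least $(r-p-2^p)/2\ge 1$ admissible disjoint pairs. The $-\kappa$ term alone absorbs the loss.

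You average over a whole pairing or pool, whereas the paper bounds the $j+1$ lightest rows by the $r-p-j-1$ rows outside $J$. As a result your constants are slightly better: denominators $r-2^p-p$ and $r-p$, so part 2 already holds for $r\ge 2p+1$. The paper's route is shorter and handles both cases with a single nondegeneracy argument.
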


\begin{proof}

By \cref{lem:closure}, \(N=M(B)/L\) with \(B=\bigl(\begin{smallmatrix}A(G)&U\\W&Q\end{smallmatrix}\bigr)\) and \(L\) the set of the \(p\) appended columns. Split \(B=A_0+P\) with \(A_0:=\bigl(\begin{smallmatrix}A(G)&0\\0&0\end{smallmatrix}\bigr)\): the matrix \(A_0\) has at most two nonzero entries per column and vanishes on \(L\), and the vertex rows of \(P\) are \((0\ \ U)\), supported on the \(p\) columns of \(L\).

Fix a set \(R\) of \(\operatorname{rank}(B)\ge r\) linearly independent rows of \(B\). For a row \(v\) let \(\omega(v):=w(\operatorname{supp}((A_0)_v))\), the weight of the support of the \emph{unperturbed} row. Since \(A_0\) vanishes on \(L\) and each column meets at most two rows, \(\sum_{v\in R}\omega(v)\le2\,w(E(N))\). Set \(j:=2^p\) in case 1 and \(j:=p\) in case 2, and let \(J\) be the \(j+1\) vertex rows of \(R\) of smallest \(\omega\). The set \(R\) has at least \(r-p\) vertex rows, and \(r-p\ge j+2\) holds in case 1 by \(r\ge2^p+p+2\) and in case 2 by \(r\ge2p+2\), so \(J\) exists and at least \(r-p-j-1\ge1\) vertex rows of \(R\) lie outside it. Each vertex row of \(R\) outside \(J\) has \(\omega\) at least \(\max_{v\in J}\omega(v)\), and hence \(\omega(v)\le 2\,w(E(N))/(r-p-j-1)\) for every \(v\in J\).

In case 1, the rows \(\{P_v:v\in J\}\) are \(2^p+1\) vectors supported on the \(p\) columns of \(L\), so they take at most \(2^p\) values, and by pigeonhole \(P_u=P_v\) for some two distinct \(u,v\in J\), and we set \(x:=\chi_u+\chi_v\). In case 2, the rows \(\{P_v:v\in J\}\) are \(p+1\) vectors in a space of dimension at most \(p\), so some nonzero \(x\) supported on \(J\) has \(x^\top P=0\). In both cases \(x^\top P=0\), so \(y:=x^\top B=x^\top A_0\). The vector \(y\) is a combination of the linearly independent rows of \(B\) with not all coefficients zero, so \(y\ne0\). Since \(A_0\) vanishes on \(L\), so does \(y\). Hence \(y\) restricts to a nonzero cocycle-space vector of \(N\), and \(\operatorname{supp}(y)\setminus L\) contains a cocircuit of \(N\) of weight at most \(\sum_{v\in J\cap\operatorname{supp}(x)}\omega(v)\). In case 1 this weight is at most \(\omega(u)+\omega(v)\le 4\,w(E(N))/(r-2^p-p-1)\). In case 2 it is at most \(\sum_{v\in J}\omega(v)\le 2(p+1)\,w(E(N))/(r-2p-1)\le 4(p+1)\,w(E(N))/r\), using \(r\ge4p+2\).\end{proof}

The pigeonhole of case 1 is the charging argument of Geelen and Kapadia \citep[Lemma 3.6]{geelen_computing_2018}, there phrased for even cuts of labelled graphs.

The ratio \(4r(N)/(r(N)-2^p-p-1)\) of case 1 is at most \(24/5\) once \(r(N)\ge6(2^p+p+1)\), so every minor of a rank-\(p\) PGM is \((r_0,c)\)-dense for \((r_0,c)=(6(2^p+p+1),\,24/5)\), and \cref{cor:points} with the trivial bound \(m\le n\) already computes the cogirth deterministically in \(n^{O(2^p)}\) time. The threshold sits in the exponent of \(n\), and it is not forced: bounded-rank minors of PGMs have at most \(2^{O(p)}\) parallel classes (\cref{lem:compression} below), which turns the same run into an FPT algorithm with parameter factor \(2^{O(p\,2^p)}\). The one remaining doubly exponential cost is the brute-force solve at rank \(\Theta(2^p)\) inside \cref{cor:points}. \cref{thm:pgm-single-exp} replaces that brute force by a second run of the algorithm: the residuals collapse to \(2^{O(p)}\) elements, and on so few elements case 2, whose ratio grows with \(p\), is affordable, since the ratio enters only the exponent of the element count.

\begin{lemma}\label{lem:compression}

Let \(N\) be a loopless minor of a rank-\(p\) PGM with weights \(w\), given in the block representation of \cref{lem:closure}. Then \(E(N)\) partitions into at most \(4(r(N)+p)^2\,2^p\) classes of pairwise parallel elements, not necessarily maximal, computable from the representation. Replacing each class by a single element carrying the class's total weight is a deletion followed by a reweighting that leaves the rank and the cogirth unchanged and yields again a loopless minor of a rank-\(p\) PGM.\end{lemma}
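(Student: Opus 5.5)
The plan is to read the classes off the block representation $B=\bigl(\begin{smallmatrix}A(G)&U\\W&Q\end{smallmatrix}\bigr)$ of \cref{lem:closure}, with $N=M(B)/L$. I will group the elements of $E(N)$ by their column in $B$: two elements with identical $B$-columns are parallel or both loops in $M(B)$, and contracting $L$ preserves this. Since $N$ is loopless, every such group is a class of pairwise parallel elements of $N$. It may not be a maximal class, which the statement allows. The grouping is clearly computable from the representation.

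It remains to count the distinct columns of $B$ restricted to $E(N)$. Such a column is a pair consisting of an $A(G)$-part and a $W$-part. The $W$-part lies in $\FF_2^p$, so it takes at most $2^p$ values. The $A(G)$-part is either zero (a loop of $G$) or the incidence vector of a pair of distinct vertices. The step I expect to carry the weight is bounding the number of vertices that actually occur. I plan to do it as follows.

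\begin{itemize}
\item Let $G'$ be the subgraph formed by the edges of $E(N)$, and let $V'$ be its non-isolated vertices.
\item Over $\FF_2$ the incidence matrix of a graph has rank $|V'|$ minus the number of components on $V'$. Every such component has $k\ge2$ vertices and contributes rank $k-1\ge k/2$, so $|V'|\le 2\operatorname{rank}A(G)|_{E(N)}$.
\item $\operatorname{rank}A(G)|_{E(N)}$ is at most the rank of the submatrix of $B$ on the columns of $E(N)$, hence at most $\operatorname{rank}(B)$.
\item $\operatorname{rank}(B)=r(N)+r_{M(B)}(L)\le r(N)+p$.
\end{itemize}

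Hence $|V'|\le 2(r(N)+p)$. The number of possible $A(G)$-parts is then at most $\binom{2(r(N)+p)}{2}+1\le 4(r(N)+p)^2$, and the total number of classes is at most $4(r(N)+p)^2\,2^p$.

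For the replacement, I keep one representative per class and delete the others, giving the representative the class's total weight. Deleting elements parallel to a kept element does not change the closure of the remaining set, so the rank is unchanged. Every flat of $N$ is closed under parallelism, so every cocircuit of $N$ is a union of parallel classes, and in particular a union of our classes. Cocircuits of the compressed matroid therefore correspond bijectively to those of $N$, and the reweighting makes the weights equal, so the cogirth is unchanged. Finally, the compressed matroid is a deletion of $N$, hence again a minor of the rank-$p$ PGM. It is loopless because its elements were nonloops of $N$, and its block representation is obtained by deleting the corresponding columns of $B$.
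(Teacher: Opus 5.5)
Your proposal is correct and follows the paper's proof essentially step for step. You group elements by identical columns of $B$, bound the vertices met by non-loop edges by $2\operatorname{rank}A(G)$ through the $k-1\ge k/2$ component count, use $\operatorname{rank}(B)\le r(N)+p$, and justify the collapse by noting that cocircuits are unions of parallel classes; your identity $\operatorname{rank}(B)=r(N)+r_{M(B)}(L)$ and the spelled-out cocircuit bijection under the deletion just make explicit what the paper states more briefly.
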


\begin{proof}

Elements whose columns of \(B=\bigl(\begin{smallmatrix}A(G)&U\\W&Q\end{smallmatrix}\bigr)\) are equal are parallel in \(M(B)\), hence parallel in \(N=M(B)/L\). A column consists of its \(A(G)\)-part, either zero or \(\chi_u+\chi_v\) for an edge \(uv\), and its \(W\)-part in \(\FF_2^p\). The number \(\nu'\) of vertices incident to an edge whose \(A(G)\)-part is nonzero satisfies \(\nu'\le 2\operatorname{rank}(A(G))\), since a component with \(a\ge2\) such vertices contributes \(a-1\ge a/2\) to the graphic rank, and \(\operatorname{rank}(A(G))\le\operatorname{rank}(B)\le r(N)+p\), since \(A(G)\) is a submatrix of \(B\) and contracting the \(p\) columns of \(L\) drops the rank by at most \(p\). So there are at most \(\bigl(1+\tbinom{\nu'}{2}\bigr)2^p\le 4(r(N)+p)^2\,2^p\) distinct columns. A union of maximal parallel classes is also a union of the classes of any finer partition into parallel elements, so the collapse-and-uncollapse argument in the proof of \cref{cor:points} applies verbatim. Deleting copies keeps \(N\) a loopless minor of a rank-\(p\) PGM, and each deleted element is parallel to a kept element, so the rank is unchanged.\end{proof}

\begin{theorem}\label{thm:pgm-single-exp}

The cogirth of a weighted rank-\(p\) PGM on \(n\) elements, given by the representation \(A(G)+P\), is computable deterministically in \(2^{O(p^2)}\,n^{O(1)}\) time.\end{theorem}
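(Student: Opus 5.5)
We run the truncation-packing scheme of \cref{thm:main} twice, as foreshadowed after \cref{lem:pgm-density}. The \emph{outer level} uses case~1 of \cref{lem:pgm-density}, the \emph{inner level} uses case~2, and \cref{lem:compression} controls the size of every residual.

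\emph{Outer level.} Every minor of the PGM is \((r_0,c)\)-dense with \(r_0=6(2^p+p+1)\) and \(c=24/5\). \cref{thm:main} with this \((r_0,c)\) reduces the cogirth of \(M\) to \(n^{O(1)}\) base-case calls, since \(c\) is constant. Each call is on a contraction minor of rank at most \(d=r_0-1+4=O(2^p)\). The block representation of each residual comes from \cref{lem:closure} in polynomial time, because each residual is a contraction of \(M\) by an independent set.

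\emph{Compression.} On each residual \(N\) of rank \(\rho=O(2^p)\) we apply \cref{lem:compression}. This yields a loopless minor \(N'\) of a rank-\(p\) PGM with the same rank and cogirth. Its element count is at most
\[
n':=4(\rho+p)^2\,2^p=2^{O(p)}.
\]
An optimum of \(N'\) uncollapses to an optimum of \(N\).

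\emph{Inner level.} Minors of \(N'\) are minors of a rank-\(p\) PGM. By case~2 of \cref{lem:pgm-density}, \(N'\) is \((r_0',c')\)-dense with \(r_0'=4p+2\) and \(c'=4(p+1)\). We apply \cref{cor:points} to \(N'\), with the trivial bound \(m\le n'\). The running time is
\[
(n')^{O(r_0')}\,(n')^{O(c')}=2^{O(p)\cdot O(p)}=2^{O(p^2)}.
\]
Equivalently, \cref{thm:main} is applied to \(N'\) with a brute-force base case on minors of rank \(O(p)\) with at most \(n'\) elements, costing \((n')^{O(p)}\). In every case \(\beta=2^{O(p^2)}+n^{O(1)}\). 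The weights of \(N'\) are general nonnegative reals, which case~2 allows. Plugging this \(\beta\) into the outer \cref{thm:main} gives total time \(n^{O(1)}\bigl(2^{O(p^2)}+n^{O(1)}\bigr)=2^{O(p^2)}n^{O(1)}\).

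\emph{Main obstacle.} The danger is that the constants of the inner level enter only the exponent of \(n'\), not of \(n\), so the accounting must be kept honest. Specifically:
\begin{itemize}
\item The inner packing's \((n')^{O(c')}\) branch count is \(2^{O(p^2)}\) only because \(n'\) is \(2^{O(p)}\), independent of \(n\). The compression must therefore be applied before the inner packing, not merely at the inner base case.
\item If \(\rho<r_0'\), we brute-force directly in \((n')^{O(p)}\) time.
\item The packing LP on \(N'\) must be solved in time polynomial in \(n'\), and an independence oracle for \(N'\) must be available. Both follow from the explicit binary representation via Gaussian elimination.
\end{itemize}
Any use of the parameter-free variant can be avoided, since \((r_0,c)\) and \((r_0',c')\) are explicit.
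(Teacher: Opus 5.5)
Your proposal is correct and follows the paper's proof essentially step for step. It runs \cref{thm:main} once using case~1 of \cref{lem:pgm-density}, compresses each residual of rank $O(2^p)$ to $2^{O(p)}$ elements via \cref{lem:compression}, and then applies \cref{cor:points} using case~2, which gives $\beta=2^{O(p^2)}n^{O(1)}$; your remarks on obtaining the block representation and the independence oracle of each residual are implementation details the paper leaves implicit in its appeals to \cref{lem:closure} and \cref{lem:compression}.
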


\begin{proof}

Assume \(p\ge1\), since a rank-\(0\) perturbation is a rank-\(1\) perturbation. The input \(M\) is \((r_0,c)\)-dense for \((r_0,c)=(6(2^p+p+1),\,24/5)\) by \cref{lem:closure} and case 1 of \cref{lem:pgm-density}, and we run \cref{thm:main} on it. It enumerates \(n^{O(c)}=n^{O(1)}\) branches, and each branch asks for a minimum-weight cocircuit of a contraction minor \(N\) of \(M\) of positive rank at most \(d=r_0-1+\lfloor c\rfloor=O(2^p)\). It remains to solve each such minor within the claimed parameter factor.

By \cref{lem:compression}, collapsing parallel classes replaces \(N\) by a minor \(N'\) of a rank-\(p\) PGM with the same rank and the same cogirth, on \(n'\le4(d+p)^2\,2^p=2^{O(p)}\) elements. Every contraction minor of \(N'\) is again a minor of a rank-\(p\) PGM, so case 2 of \cref{lem:pgm-density} shows \(N'\) is \((4p+2,\,4(p+1))\)-dense, and applying \cref{cor:points} to \(N'\) computes a minimum-weight cocircuit of \(N'\) in \(n'^{O(p)}=2^{O(p^2)}\) time. Uncollapsing yields a minimum-weight cocircuit of \(N\) of the same weight.

The collapse is computable from the representation in polynomial time (\cref{lem:compression}), so the base case of \cref{thm:main} runs in time \(\beta=2^{O(p^2)}\,n^{O(1)}\), and the total running time is \(2^{O(p^2)}\,n^{O(1)}\).\end{proof}

\section{Extensions}\label{sec:extensions}

\subsection{\texorpdfstring{Minimum-weight \(k\)-cocycles}{Minimum-weight k-cocycles}}\label{sec:kext}

The next two propositions are well known, and both follow by inspecting the rank function of the truncation \citep{Oxley06}.

\begin{proposition}\label{prop:trunc-cocircuit}

Let \(M\) be a matroid of rank \(r\) with weights \(w\) and let \(1\le k\le r\). The cocircuits of \(T_{k-1}(M)\) are the inclusion-minimal \(k\)-cocycles of \(M\), and \(\lambda(T_{k-1}(M))=\lambda_k(M)\).\end{proposition}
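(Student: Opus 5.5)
The plan is to compare flats of \(T:=T_{k-1}(M)\) with flats of \(M\) directly, using the rank function \(r_T(X)=\min(r(X),\,r-k+1)\). Since complements of hyperplanes are exactly the minimal nonempty cocycles, it suffices to identify the hyperplanes of \(T\). The target claim is that these are exactly the flats \(F\) of \(M\) with \(r_M(F)\le r-k\) that are maximal among such flats under inclusion.

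First I would handle the forward direction. Let \(H\) be a hyperplane of \(T\). Then \(r_T(H)=r-k<r-k+1\), so \(r_M(H)=r-k\). If \(e\notin H\), then \(r_T(H\cup\{e\})>r_T(H)\), and this forces \(r_M(H\cup\{e\})>r_M(H)\). Hence \(H\) is a flat of \(M\) of rank exactly \(r-k\), and \(E\setminus H\) is a \(k\)-cocycle of \(M\).

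Next I would prove the converse. Let \(F\) be a flat of \(M\) with \(r_M(F)\le r-k\). Then \(r_T(F)=r_M(F)\). For any \(e\notin F\), we have \(r_M(F\cup\{e\})=r_M(F)+1\le r-k+1\), so the \(T\)-rank also increases. Thus \(F\) is a flat of \(T\) of rank at most \(r-k\), that is, a proper flat of \(T\). Every proper flat lies in some hyperplane, so \(F\) lies in a hyperplane of \(T\), which by the first step is a flat of \(M\) of rank \(r-k\).

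Taking complements then gives the first assertion. The \(k\)-cocycles of \(M\) are exactly the sets \(E\setminus F\) with \(F\) a flat of \(M\) and \(r_M(F)\le r-k\). Each such set contains \(E\setminus H\) for a hyperplane \(H\) of \(T\), and each \(E\setminus H\) is itself a \(k\)-cocycle. So the inclusion-minimal \(k\)-cocycles are exactly the sets \(E\setminus H\). These complements are pairwise incomparable because distinct hyperplanes are, so they are also precisely the cocircuits of \(T\).

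For the weight identity, \(\lambda(T)\) is the minimum weight of a cocircuit of \(T\), which is the minimum weight over minimal \(k\)-cocycles. Because weights are nonnegative and every \(k\)-cocycle contains a minimal one, this minimum equals \(\lambda_k(M)\).

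The only step that needs care is the one identifying flats of \(T\) below the cap with flats of \(M\). The key point there is that adding one element raises the \(M\)-rank by at most one, so the rise stays at or below \(r-k+1\) and is never cut off by the truncation. Finally, if \(T\) has loops, these are loops of \(M\), which lie in every flat, so they affect nothing.
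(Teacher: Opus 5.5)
Your proof is correct and follows the route the paper intends. The paper gives no detailed argument, only remarking that the statement follows by inspecting the rank function \(r_{T}(X)=\min(r(X),\,r-k+1)\) of the truncation, and your identification of the hyperplanes of \(T_{k-1}(M)\) with the maximal flats of \(M\) of rank at most \(r-k\), together with the pairwise-incomparability and nonnegative-weight steps, fills in exactly those details.
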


\begin{proposition}\label{prop:trunc-contract}

Let \(M\) be a matroid of rank \(r\), let \(0\le h<r\), and let \(Y\subseteq E(M)\) with \(r_M(Y)<r-h\). Then \(T_h(M)/Y=T_h(M/Y)\).\end{proposition}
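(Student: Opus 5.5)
We compare rank functions directly: two matroids on the same ground set \(E\setminus Y\) coincide exactly when their rank functions agree. Fix \(X\subseteq E\setminus Y\) and write \(\rho:=r_M(Y)\), so that \(\rho<r-h\) by hypothesis. The hypothesis then gives \(r_{T_h(M)}(Y)=\min(\rho,r-h)=\rho\), and so by the definition of contraction
\[
r_{T_h(M)/Y}(X)=\min\bigl(r_M(X\cup Y),\,r-h\bigr)-\rho=\min\bigl(r_M(X\cup Y)-\rho,\,(r-\rho)-h\bigr).
\]

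On the other side, \(M/Y\) has rank \(r-\rho\) and rank function \(r_{M/Y}(X)=r_M(X\cup Y)-\rho\). To form its truncation we need \(0\le h<r(M/Y)=r-\rho\), and this is again the hypothesis, so \(T_h(M/Y)\) is defined. Its rank function is \(\min\bigl(r_M(X\cup Y)-\rho,\,(r-\rho)-h\bigr)\), which matches the display above. Since the ground sets are equal and the rank functions agree on every \(X\), the two matroids are equal.

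The key point is that \(r_M(Y)<r-h\) does two jobs. It makes \(Y\) unaffected by the truncation cap, so its rank in \(T_h(M)\) equals its rank in \(M\). It also ensures \(h<r(M/Y)\), so the truncation \(T_h(M/Y)\) exists. Without the first, the subtracted term would be \(r-h\) instead of \(\rho\) and the identity would fail. The same computation also checks the weaker reading in terms of independent sets: a set is independent in \(T_h(M)/Y\) if and only if it is independent in \(M/Y\) and has size at most \(r-\rho-h\).
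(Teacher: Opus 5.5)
Your proof is correct, and it takes the same approach the paper intends: the paper gives no written proof, noting only that the proposition is well known and follows by inspecting the rank function of the truncation \citep{Oxley06}. With \(\rho:=r_M(Y)<r-h\), you correctly show that both sides have rank function \(X\mapsto\min\bigl(r_M(X\cup Y)-\rho,\,r-\rho-h\bigr)\), and that the hypothesis both leaves the rank of \(Y\) unchanged under truncation and makes \(T_h(M/Y)\) well defined.
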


In particular, in the setting of \cref{prop:trunc-cocircuit}, every contraction minor \(T_{k-1}(M)/Y\) of positive rank is \(T_{k-1}(N)\) for the contraction minor \(N:=M/Y\) of rank \(r(T_{k-1}(M)/Y)+k-1\ge k\), since positive rank forces \(r_M(Y)<r-k+1\). The truncation \(T_{k-1}(N)\) is loopless whenever \(N\) is, parallel elements of \(N\) stay parallel in it, and \(\varepsilon(T_{k-1}(N))\le\varepsilon(N)\).

A greedy contraction argument converts \((r_0,c)\)-density into a bound on \(\lambda_k\).

\begin{lemma}\label{lem:trunc-dense}

Let \(k\ge1\) be an integer and let \(M\) be a loopless \((r_0,c)\)-dense matroid of rank \(r\ge k\) with weights \(w\). Then \(T_{k-1}(M)\) is \((r_0,kc)\)-dense.\end{lemma}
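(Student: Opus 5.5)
The plan is to unfold the definition through \cref{prop:trunc-cocircuit,prop:trunc-contract}, which turns the claim into a bound on \(\lambda_k\) of a contraction minor of \(M\). That bound then comes from peeling off \(k\) cocircuits, one per contraction.

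\textbf{Reduction.} Let \(T_{k-1}(M)/Y\) be a contraction minor of rank \(\rho\ge r_0\). It has positive rank, so by the remark after \cref{prop:trunc-contract} it equals \(T_{k-1}(N)\) with \(N:=M/Y\) a contraction minor of \(M\) of rank \(\rho+k-1\). Its ground set is \(E(N)\). By \cref{prop:trunc-cocircuit}, \(\lambda(T_{k-1}(N))=\lambda_k(N)\). It therefore suffices to prove
\[
\lambda_k(N)\le k\,c\,\frac{w(E(N))}{\rho}\qquad\text{whenever } r(N)=\rho+k-1 \text{ and } \rho\ge r_0 .
\]
Loops of \(N\) change neither side, so we may take \(N\) loopless.

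\textbf{Greedy step.} The key inequality is: for a loopless \(N\) of rank \(r'\ge 2\), every \(2\le j\le r'\), and every \(e\in E(N)\) that lies in some minimum-weight cocircuit \(D\) of \(N\),
\[
\lambda_j(N)\le\lambda(N)+\lambda_{j-1}(N/e).
\]
To see this, let \(H:=E(N)\setminus D\), a hyperplane with \(e\notin H\). Let \(S'\) be a minimum \((j-1)\)-cocycle of \(N/e\), and let \(F'\) be its complement in \(E(N)\setminus\{e\}\), a flat of \(N/e\) of rank at most \(r'-j\). Then \(F'\cup\{e\}\) is a flat of \(N\) of rank at most \(r'-j+1\). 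Since \(e\notin H\), the intersection \((F'\cup\{e\})\cap H\) is a flat properly contained in \(F'\cup\{e\}\), so its rank is at most \(r'-j\). Its complement is exactly \(S'\cup D\). Hence \(S'\cup D\) is a \(j\)-cocycle of weight at most \(\lambda(N)+\lambda_{j-1}(N/e)\).

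\textbf{Iteration.} Apply the greedy step \(k-1\) times along a chain \(N=N_0,\ N_1=N_0/e_1,\dots,N_{k-1}\). The ranks are \(r(N_i)=\rho+k-1-i\ge\rho\), and we delete loops after each contraction, which changes no \(\lambda_j\). This gives
\[
\lambda_k(N)\le\sum_{i=0}^{k-1}\lambda(N_i).
\]
Each \(N_i\) is a contraction minor of \(M\) of rank at least \(\rho\ge r_0\), so \((r_0,c)\)-density gives
\[
\lambda(N_i)\le c\,\frac{w(E(N_i))}{r(N_i)}\le c\,\frac{w(E(N))}{\rho},
\]
using \(E(N_i)\subseteq E(N)\) and nonnegative weights. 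Summing the \(k\) terms yields the claim.

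\textbf{Main obstacle.} The delicate point is the greedy step. Density is available only for contraction minors, so the second cocircuit must come from \(N/e\) rather than from the deletion \(N|H\). One must also check that combining it with \(D\) really drops the rank by \(j\). Choosing \(e\in D\) is what makes the intersection of flats strictly smaller. The remaining checks are routine: the step is only needed for \(j\le k\le r(N)\), and every \(N_i\) keeps positive rank.
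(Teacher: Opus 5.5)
Your proof is correct and follows the paper's argument: the same reduction to \(\lambda_k(N)\le kc\,w(E(N))/\rho\) for \(N=M/Y\) of rank \(\rho+k-1\), and the same greedy chain of contractions through elements of minimum-weight cocircuits, each cocircuit bounded by density at rank at least \(\rho\). The only difference is how the rank drop is verified. You check it one level at a time, through the recursive inequality \(\lambda_j(N)\le\lambda(N)+\lambda_{j-1}(N/e)\) and the flat \((F'\cup\{e\})\cap H\). The paper instead checks directly that the union \(C_1\cup\cdots\cup C_k\) drops the rank by \(k\), via the increments \(r_N(R\cup I_j)=r_N(R\cup I_{j-1})+1\).
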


\begin{proof}

Let \(T_{k-1}(M)/Y\) be a contraction minor of rank \(\rho\ge r_0\). As observed after \cref{prop:trunc-contract}, it is \(T_{k-1}(N)\) for the contraction minor \(N:=M/Y\) of rank \(r(N)=\rho+k-1\ge r_0+k-1\), and \(N\) is \((r_0,c)\)-dense (\cref{sec:prelim}). Since \(\lambda(T_{k-1}(N))=\lambda_k(N)\) by \cref{prop:trunc-cocircuit} and \(r(T_{k-1}(N))=\rho\), it suffices to show \(\lambda_k(N)\le kc\,w(E(N))/\rho\), which we do greedily.

Construct independent sets \(\varnothing=I_0\subsetneq\cdots\subsetneq I_k\) and cocircuits \(C_1,\dots,C_k\): given \(I_j\) with \(j<k\), the contraction minor \(N_j:=N/I_j\) has rank \(r(N)-j\ge r_0\), so \(\lambda(N_j)\le c\,w(E(N_j))/(r(N)-j)\le c\,w(E(N))/(r(N)-k+1)=c\,w(E(N))/\rho\). Let \(C_{j+1}\) be a minimum-weight cocircuit of \(N_j\), pick \(e_{j+1}\in C_{j+1}\), and set \(I_{j+1}:=I_j\cup\{e_{j+1}\}\), independent in \(N\).

The union \(S:=C_1\cup\cdots\cup C_k\) has \(w(S)\le\sum_{j=1}^k w(C_j)\le kc\,w(E(N))/\rho\), so it suffices to show that removing \(S\) drops the rank of \(N\) by \(k\). Let \(R:=E(N)\setminus S\). For each \(j\), the cocircuit \(C_j\) of \(N_{j-1}\) is a cocircuit of \(N\) disjoint from \(I_{j-1}\) (\cref{sec:prelim}), so \(H_j:=E(N)\setminus C_j\) is a hyperplane of \(N\) with \(R\cup I_{j-1}\subseteq H_j\) and \(e_j\notin H_j\), and hence \(e_j\notin\operatorname{cl}_N(R\cup I_{j-1})\) and \(r_N(R\cup I_j)=r_N(R\cup I_{j-1})+1\). Summing over \(j=1,\dots,k\) gives \(r_N(R)+k=r_N(R\cup I_k)\le r(N)\), and the \(k\)-cocycle inside \(S\) (\cref{sec:prelim}) gives \(\lambda_k(N)\le w(S)\).\end{proof}

\begin{corollary}\label{cor:kcocycle}

Let \(k\ge1\) be an integer and let \(M\) be a loopless \((r_0,c)\)-dense matroid of rank \(r\ge k\) on \(n\) elements, given by an independence oracle, with weights \(w:E(M)\to\mathbb R_{\ge0}\). Put \(m:=\max\{\varepsilon(N):N\text{ a contraction minor of }M,\ r(N)\le r_0+k-2+\lfloor kc\rfloor\}\). Then a minimum-weight \(k\)-cocycle of \(M\) is computable deterministically in \(m^{O(r_0)}\,n^{O(kc)}\) time.\end{corollary}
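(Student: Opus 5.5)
We apply \cref{cor:points} to the truncation \(T:=T_{k-1}(M)\) and then translate the answer back to \(M\). By \cref{prop:trunc-cocircuit}, a minimum-weight cocircuit of \(T\) is an inclusion-minimal \(k\)-cocycle of \(M\) of weight \(\lambda(T)=\lambda_k(M)\). Every \(k\)-cocycle contains an inclusion-minimal one and weights are nonnegative, so this cocircuit is a minimum-weight \(k\)-cocycle of \(M\). It therefore suffices to compute a minimum-weight cocircuit of \(T\) within the claimed time.

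The first step is to check the hypotheses of \cref{cor:points} for \(T\). It has rank \(r-k+1\ge1\), the same ground set of \(n\) elements and the same weights. It is loopless because \(M\) is loopless and \(k-1<r\). An independence oracle for \(T\) is immediate: test independence in \(M\) and check that the size is at most \(r-k+1\). The value \(r\) is computable with \(n\) oracle calls by greedy. By \cref{lem:trunc-dense}, \(T\) is \((r_0,kc)\)-dense. Applying \cref{cor:points} with parameters \((r_0,kc)\) sets \(d_T:=r_0-1+\lfloor kc\rfloor\) and \(m_T:=\max\{\varepsilon(N'):N'\text{ a contraction minor of }T,\ r(N')\le d_T\}\). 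The corollary then gives running time \(m_T^{O(r_0)}n^{O(kc)}\).

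The remaining step is to bound \(m_T\le m\). If \(N'=T/Y\) has positive rank \(\rho\le d_T\), the remark after \cref{prop:trunc-contract} gives \(N'=T_{k-1}(N)\) with \(N:=M/Y\) of rank \(\rho+k-1\le r_0+k-2+\lfloor kc\rfloor\). That remark also gives \(\varepsilon(N')\le\varepsilon(N)\le m\). If \(N'\) has rank \(0\), it has no parallel classes. Two minor points need care. First, a minor \(T/Y\) with \(r_M(Y)\ge r-k+1\) has rank \(0\), so it is harmless. Second, loops of \(N\) are loops of \(N'\), so they are discarded consistently on both sides. Hence \(m_T\le m\), and the bound \(m^{O(r_0)}n^{O(kc)}\) follows.

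The main obstacle is the density transfer, which \cref{lem:trunc-dense} already supplies. What is left is mainly bookkeeping: matching the rank threshold \(r_0+k-2+\lfloor kc\rfloor\) in the definition of \(m\) with \(d_T+k-1\). If we use \cref{prop:overhead} in place of \cref{cor:points}, the algorithm also needs to know neither \(r_0\) nor \(c\), although it still needs \(k\) to form the truncation.
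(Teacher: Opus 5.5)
Your proposal is correct and follows the paper's proof essentially step for step. You apply \cref{cor:points} to \(T_{k-1}(M)\), which is \((r_0,kc)\)-dense by \cref{lem:trunc-dense}. You bound its parallel-class count on bounded-rank minors by \(m\) through the remark after \cref{prop:trunc-contract}, using \(d_T+k-1=r_0+k-2+\lfloor kc\rfloor\). You then translate the optimal cocircuit back through \cref{prop:trunc-cocircuit}. Your extra checks on looplessness of the truncation, rank-\(0\) minors and loops are details the paper leaves implicit, and they are correct.
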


\begin{proof}

The truncation \(T:=T_{k-1}(M)\) is loopless of rank \(r-k+1\ge1\) on the same \(n\) elements, its independence oracle is immediate from that of \(M\), and it is \((r_0,kc)\)-dense by \cref{lem:trunc-dense}. Apply \cref{cor:points} to \(T\): its rank bound is \(d_T:=r_0-1+\lfloor kc\rfloor\), and, as observed after \cref{prop:trunc-contract}, every contraction minor of \(T\) of positive rank at most \(d_T\) is \(T_{k-1}(N)\) for a contraction minor \(N\) of \(M\) of rank at most \(d_T+k-1\) with \(\varepsilon(T_{k-1}(N))\le\varepsilon(N)\le m\). The run computes a minimum-weight cocircuit of \(T\) in \(m^{O(r_0)}\,n^{O(kc)}\) time, and by \cref{prop:trunc-cocircuit} that cocircuit is a minimum-weight \(k\)-cocycle of \(M\).\end{proof}

Specialized to the graphic matroid of a connected graph on vertex set \(V\), where every contraction minor has cogirth-density ratio at most \(2-2/|V|\), the corollary packs spanning forests with \(k\) components, and some packed forest has trace at most \(\lfloor 2k(1-1/|V|)\rfloor\le 2k-1\) on a minimum \((k+1)\)-cut. The constant matches the \(2(k+1)-3\) of Chekuri et al. \citep[Corollary 8]{chekuri_lp_2020}.

\subsection{Near-minimum enumeration}\label{sec:enum}

\begin{corollary}\label{cor:enum}

Let \(M\) be a loopless \((r_0,c)\)-dense matroid of rank \(r\ge1\) on \(n\) elements, given by an independence oracle, with weights \(w:E(M)\to\mathbb R_{\ge0}\) and \(\lambda(M)>0\), let \(\theta\ge1\), and put \(m:=\max\{\varepsilon(N):N\text{ a contraction minor of }M,\ r(N)\le r_0-1+\lfloor\theta c\rfloor\}\). A list of \(1\)-cocycles containing every \(1\)-cocycle of weight at most \(\theta\lambda(M)\) is computable deterministically in \(m^{O(r_0)}\,n^{O(\theta c)}\) time.\end{corollary}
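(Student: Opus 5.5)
We run the truncation packing of \cref{thm:main} at depth \(h:=r_0-1\), but with the cap raised to \(\tau:=\lfloor\theta c\rfloor\). Inside each residual, instead of a single minimum-weight cocircuit, we enumerate \emph{all} \(1\)-cocycles of the residual. The output list is the union, over all branches, of these residual \(1\)-cocycles, and its members are \(1\)-cocycles of \(M\).

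\textbf{Trivial cases.} If \(r\le h\), we enumerate the \(1\)-cocycles of \(M\) directly, as complements of closures of sets of size at most \(r-1\) after collapsing parallel classes. Otherwise \(\lambda(M)>0\), and \cref{lem:truncation-strength} gives \(\sigma(T_h(M))\ge\lambda(M)/c>0\).

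\textbf{Some branch avoids the target cocycle.} Fix a \(1\)-cocycle \(S\) with \(w(S)\le\theta\lambda(M)\). The averaging computation of \cref{thm:main}, drawing \(B_i\) with probability \(\alpha_i/\sigma(T_h(M))\), gives
\[
\mathbb E\,|B_i\cap S|\le \frac{w(S)}{\sigma(T_h(M))}\le \theta c .
\]
Note that this uses only the load constraints and \(w(S)\), not minimality of \(S\). Hence some packed base \(B_i\) has \(|B_i\cap S|\le\lfloor\theta c\rfloor\), and the branch with \(Z:=B_i\cap S\) contracts the independent set \(Y:=B_i\setminus Z\), which is disjoint from \(S\).

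\textbf{\(S\) survives in that branch.} Since \(S\) is a cocycle, \(E\setminus S\) is a flat \(F\) of rank at most \(r-1\) containing \(Y\). Then \(F\setminus Y\) is a flat of \(M/Y\) of rank \(r_M(F)-r_M(Y)<r(M/Y)\). So \(S\) is a \(1\)-cocycle of the residual \(M/Y\), which has rank \(h+|Z|\le r_0-1+\lfloor\theta c\rfloor\). Conversely, every cocycle of \(M/Y\) is the complement of a flat \(F'\) of \(M/Y\), and \(F'\cup Y\) is a flat of \(M\) of smaller rank. So everything listed is a \(1\)-cocycle of \(M\).

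\textbf{Enumerating residual cocycles.} In a residual of rank \(\rho\le d':=r_0-1+\lfloor\theta c\rfloor\), collapse parallel classes to at most \(m\) elements. Every flat of rank at most \(\rho-1\) is the closure of an independent set of size at most \(\rho-1\). There are \(m^{\rho-1+O(1)}\) such sets, and each closure and complement (uncollapsed) is computed with \(n^{O(1)}\) oracle calls. Thus each branch lists \(m^{d'+O(1)}\) cocycles in \(m^{d'+O(1)}n^{O(1)}\) time. Multiplying by the \((\tau+1)n^{1+\tau}\) branches and by the polynomial packing cost gives \(n^{O(\theta c)}m^{r_0+\lfloor\theta c\rfloor+O(1)}\). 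Since \(m\le n\), the \(m^{\lfloor\theta c\rfloor}\) part is absorbed into \(n^{O(\theta c)}\), which yields \(m^{O(r_0)}n^{O(\theta c)}\).

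\textbf{Main obstacle.} The key point is checking that the averaging bound really applies to an arbitrary near-minimum cocycle rather than only to \(C^\star\), and that \(S\) stays a \(1\)-cocycle after contraction. Both reduce to the flat-contraction fact above. One can optionally filter the list to weight at most \(\theta\lambda(M)\), computing \(\lambda(M)\) by \cref{cor:points}; this is not required by the statement.
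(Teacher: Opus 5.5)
Your proposal is correct and follows the paper's proof essentially step for step. Both run the packing of \cref{thm:main} at depth $r_0-1$ with the cap raised to $\lfloor\theta c\rfloor$, and both apply the averaging bound to an arbitrary $1$-cocycle $S$ of weight at most $\theta\lambda(M)$. Both note that the contracted set lies in the flat $E\setminus S$, so $S$ survives as a $1$-cocycle of the residual, and both enumerate the closures of independent sets in the collapsed residual. Your explicit check that every listed set is a $1$-cocycle of $M$ is a detail the paper leaves implicit.
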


\begin{proof}

We run the proof of \cref{thm:main} with the cap raised to \(\lfloor\theta c\rfloor\), retaining every candidate instead of the lightest. The averaging argument caps the trace of each such \(1\)-cocycle \(A\) at \(w(A)/\sigma(T_h(M))\le\theta c\), so some branch contracts a set \(Y\) disjoint from \(A\), and the flat \(E\setminus A\) contains \(Y\), so \(A\) is a \(1\)-cocycle of the residual. As the base case, we collapse the residual \(R\) to at most \(m\) elements, sound since \(1\)-cocycles are unions of parallel classes as in the proof of \cref{cor:points}, enumerate the closures of all its independent sets, retain \(E(R)\setminus F\) for every closure \(F\) with \(r(F)<r(R)\), and uncollapse, for \(m^{O(r_0+\theta c)}\) candidates per branch and \(m^{O(r_0)}\,n^{O(\theta c)}\) time in all by \(m\le n\).\end{proof}

The list also bounds the count: a loopless \((r_0,c)\)-dense matroid with \(\lambda(M)>0\) has at most \(m^{O(r_0)}\,n^{O(\theta c)}\) many \(1\)-cocycles of weight at most \(\theta\lambda(M)\). A count of \(n^{O(r_0+\theta c)}\) follows already from the survival probability in the randomized analysis, as in Karger's bounds on near-minimum cuts and matroid quotients \citep[Lemma 4.5]{karger_global_1993, Karger98}, and the contribution of the corollary is the deterministic list. For graphs, Nagamochi et al.~list all near-minimum cuts deterministically \citep{nagamochi_computing_1997}, and the corollary is an alternative route, through packing, in the matroid setting.

\subsection{Multicriteria}\label{sec:pareto}

The multicriteria minimum-cut literature runs on approximate-cut counting through contraction \citep{armon_multicriteria_2006, aissi_strongly_2015, beideman_multicriteria_2020}, and the route transfers through \cref{cor:enum}. Given criteria \(w_1,\dots,w_q:E(M)\to\mathbb R_{>0}\), a cocircuit \textbf{dominates} another if it is at most as heavy under every criterion and strictly lighter under one, and a cocircuit is \textbf{Pareto-optimal} if no cocircuit dominates it.

\begin{corollary}\label{cor:pareto}

Fix an integer \(q\ge1\), fix \(r_0\ge1\) and \(c\ge1\), and let \(M\) be a loopless matroid of positive rank on \(n\) elements, given by an independence oracle, with criteria \(w_1,\dots,w_q:E(M)\to\mathbb R_{>0}\), and put \(m:=\max\{\varepsilon(N):N\text{ a contraction minor of }M,\ r(N)\le r_0-1+\lfloor2qc\rfloor\}\). If every contraction minor of \(M\) of rank at least \(r_0\) has cogirth-density ratio at most \(c\) under every positive weight function, then all Pareto-optimal cocircuits of \(M\) are computable deterministically in \(m^{O(r_0)}\,n^{O(qc)}\) time, up to a factor polynomial in the logarithm of the weight spread \(\max_{i,e}w_i(e)/\min_{i,e}w_i(e)\).\end{corollary}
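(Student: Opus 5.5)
The plan is to follow the multicriteria cut literature, where the Pareto frontier is covered by minimizers of scalarized weights. We then apply \cref{cor:enum} to a polynomial-in-log family of positive scalarizations $w_\gamma:=\sum_i\gamma_i w_i$, each with parameter $\theta=2q$. Since every $w_\gamma$ is a positive weight function, the hypothesis gives $(r_0,c)$-density of $M$ under $w_\gamma$. Moreover $\lambda_{w_\gamma}(M)>0$, because all weights are positive and $M$ is loopless of positive rank. So \cref{cor:enum} applies, and its rank threshold $r_0-1+\lfloor 2qc\rfloor$ is exactly the one in the definition of $m$. Each call returns a list of $1$-cocycles in $m^{O(r_0)}n^{O(qc)}$ time. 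We keep those that are cocircuits, which is checkable by oracle since a $1$-cocycle is a cocircuit iff each proper subset obtained by removing one element fails to drop the rank. From the union of the lists we output the nondominated cocircuits by pairwise comparison.

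The key step is a covering lemma. For every Pareto-optimal cocircuit $C$ there must be a scalarization $\gamma$ in our family with $w_\gamma(C)\le 2q\,\lambda_{w_\gamma}(M)$. I would first try $\gamma_i:=1/w_i(C)$, which gives $w_\gamma(C)=q$. The difficulty is bounding $\lambda_{w_\gamma}$ from below by $1/2$. A cocircuit $D$ with $w_\gamma(D)<1/2$ has $w_i(D)<w_i(C)/2$ for every $i$, so it strictly dominates $C$, a contradiction. Hence $\lambda_{w_\gamma}\ge 1/2$, and $w_\gamma(C)=q\le 2q\lambda_{w_\gamma}$. This leaves slack, since the ratio $q$ already suffices up to the factor $2$.

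The family must be finite, so $\gamma$ is rounded. We take $\gamma_i$ ranging over powers of $2$ in the range $[1/\max w,\,1/\min w]$ after normalization, so that only the ratios matter. Fixing $\gamma_1$ leaves $O(\log(\text{spread}))^{q-1}$ choices. Rounding $\gamma_i$ down to a power of two $\gamma'_i\in(\gamma_i/2,\gamma_i]$ gives $w_{\gamma'}(C)\le q$. For any cocircuit $D$, some $i$ has $w_i(D)\ge w_i(C)$ by Pareto-optimality of $C$; this is not exactly what nondomination says, so a separate argument is needed for $D$ with $w(D)=w(C)$ coordinatewise. In that case $w_{\gamma'}(D)\ge\gamma'_i w_i(C)>1/2$. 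So $\lambda_{w_{\gamma'}}\ge 1/2$ and $\theta=2q$ works.

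The main obstacle is exactly this nondomination step: a cocircuit $D$ could be lighter than $C$ in all coordinates but one and equal in that one. That is fine, because then $w_i(D)=w_i(C)$ still gives the bound. The subtle case is $D$ with $w(D)\le w(C)$ coordinatewise and not strictly smaller anywhere, which means $D$ ties $C$. Equality in some coordinate again suffices, so the argument holds for every cocircuit $D$.

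The running time is then (number of scalarizations) $\times$ $m^{O(r_0)}n^{O(qc)}$, plus quadratic filtering over lists of that size, giving the claimed bound with the polylog factor in the weight spread.
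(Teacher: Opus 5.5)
Your proposal is correct and is essentially the paper's proof: scalarize with \(\gamma_i=1/w_i(C)\), round to a power-of-two net (you round down where the paper rounds up, which is symmetric), run \cref{cor:enum} at \(\theta=2q\), and filter by pairwise comparison. Three small touch-ups. The net must cover coordinate ratios up to \(n\) times the weight spread, since \(w_i(C)\) can reach \(n\max_{i,e}w_i(e)\); this costs only an extra \(\log n\). The fact that ``some \(i\) has \(w_i(D)\ge w_i(C)\)'' is exactly what nondomination gives, so your extra case analysis is unnecessary. Finally, the filter is exact because every non-Pareto-optimal listed cocircuit is dominated by a Pareto-optimal one, which is also listed, and this step should be stated.
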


\begin{proof}

A Pareto-optimal cocircuit \(A\) satisfies \(w_\gamma(A)=q\le q\,w_\gamma(D)\) for every cocircuit \(D\) under \(w_\gamma:=\sum_i\gamma_iw_i\) with \(\gamma_i:=1/w_i(A)\), since some \(i\) has \(w_i(D)\ge w_i(A)\). Rounding \(\gamma\) up coordinatewise to the net \(\Lambda\) of combinations whose coordinates are powers of two between \(1/(n\max_{i,e}w_i(e))\) and \(2/\min_{i,e}w_i(e)\), of size \(O(\log(n\max_{i,e}w_i(e)/\min_{i,e}w_i(e)))^q\), makes \(A\) a \(2q\)-approximate minimum-weight cocircuit under some net combination. We run \cref{cor:enum} at \(\theta=2q\) under every combination in \(\Lambda\), replace every listed \(1\)-cocycle by a cocircuit inside it, and filter the union of the lists by pairwise comparison. Every Pareto-optimal cocircuit is listed, and every other cocircuit is dominated by a Pareto-optimal one, since dominance is a strict partial order on a finite set, so the filter outputs exactly the Pareto-optimal cocircuits.\end{proof}

Restricting to cocircuits loses nothing: a proper \(1\)-cocycle inside another is strictly lighter under every positive criterion, so every Pareto-optimal \(1\)-cocycle, with dominance read over \(1\)-cocycles, is inclusion-minimal and \cref{cor:pareto} lists the whole Pareto frontier over \(1\)-cocycles. The list also settles budgeted optimization, minimizing \(w_1\) over cocircuits subject to budgets on \(w_2,\dots,w_q\): a cocircuit dominating a within-budget cocircuit is itself within budget and no heavier under \(w_1\), so some Pareto-optimal cocircuit in the list attains the budgeted optimum. Parametric optima, minimum-weight cocircuits under a fixed positive combination of the criteria, are also in the list, since a dominating cocircuit would be strictly lighter under the combination, and for graphs their enumeration is Karger's, at \(O(n^{q+1})\) many \citep{karger_enumerating_2016}. Specialized to the graphic matroid of a connected graph, where every contraction minor has cogirth-density ratio at most \(2\) under every weight function, \cref{cor:pareto} enumerates all Pareto-optimal cuts deterministically in \(n^{O(q)}\) time, up to a factor polynomial in the logarithm of the weight spread. The contraction-based enumeration of Armon and Zwick is pseudo-polynomial \citep{armon_multicriteria_2006}, and a deterministic polynomial-time enumeration for constant-rank hypergraphs follows from an approach of Zenklusen reported in \citep{beideman_multicriteria_2020}, so the content of the specialization is the matroid route, and of the corollary the matroid setting.

\bibliographystyle{plainnat}
\bibliography{ref}

\end{document}